\newtheorem{theorem}{\indent Theorem}
\newtheorem{lemma}{\indent Lemma}
\newtheorem*{proof}{\indent Proof}
\newtheorem{remark}{\indent Remark}
\newtheorem{corollary}{\indent Corollary}
\newcommand{\Rmnum}[1]{\expandafter\@slowromancap\romannumeral #1@}
\begin{document}

\title{Superimposed RIS-phase Modulation for MIMO Communications: A Novel Paradigm of Information Transfer}

\author{Jiacheng~Yao,
		Jindan~Xu, \IEEEmembership{Member,~IEEE,}
		Wei~Xu,~\IEEEmembership{Senior~Member,~IEEE,}
		Chau~Yuen,~\IEEEmembership{Fellow,~IEEE,}
        and~Xiaohu~You,~\IEEEmembership{Fellow,~IEEE}
\thanks{J. Yao, W. Xu, and X. You are with the National Mobile Communications Research Laboratory (NCRL), Southeast University, Nanjing 210096, China (\{jcyao, wxu, xhyu\}@seu.edu.cn).}
\thanks{J. Xu and C. Yuen are with the School of Electrical and Electronics Engineering, Nanyang Technological University, Singapore 639798, Singapore (e-mail: jindan1025@gmail.com, chau.yuen@ntu.edu.sg).}}

%

\maketitle
\vspace{-1.95cm}
\begin{abstract}
Reconfigurable intelligent surface (RIS) is regarded as an important enabling technology for the sixth-generation (6G) network. Recently, modulating information in reflection patterns of RIS, referred to as reflection modulation (RM), has been proven in theory to have the potential of achieving higher transmission rate than existing passive beamforming (PBF) schemes of RIS. To fully unlock this potential of RM, we propose a novel superimposed RIS-phase modulation (SRPM) scheme for multiple-input multiple-output (MIMO) systems, where tunable phase offsets are superimposed onto predetermined RIS phases to bear extra information messages. The proposed SRPM establishes a universal framework for RM, which retrieves various existing RM-based schemes as special cases. Moreover, the advantages and applicability of the SRPM in practice is also validated in theory by analytical characterization of its performance in terms of average bit error rate (ABER) and ergodic capacity. To maximize the performance gain, we formulate a general precoding optimization at the base station (BS) for a single-stream case with uncorrelated channels and obtain the optimal SRPM design via the semidefinite relaxation (SDR) technique. Furthermore, to avoid extremely high complexity in maximum likelihood (ML) detection for the SRPM, we propose a sphere decoding (SD)-based layered detection method with near-ML performance and much lower complexity. Numerical results demonstrate the effectiveness of SRPM, precoding optimization, and detection design. It is verified that the proposed SRPM achieves a higher diversity order than that of existing RM-based schemes and outperforms PBF significantly especially when the transmitter is equipped with limited radio-frequency (RF) chains. 
\end{abstract}
\vspace{-0.5cm}
\begin{IEEEkeywords}
Reconfigurable intelligent surface (RIS), reflection modulation (RM), average bit error rate (ABER), sphere decoding (SD).
\end{IEEEkeywords}

\section{Introduction}
In order to support emerging wireless applications, such as fully immersive eXtended Reality (XR), the requirement of an order-of-magnitude higher rate than that of the current wireless network is one of the key performance indicators (KPI) of the sixth-generation (6G) network \cite{you,6g2}. In addition, the ability of the network to collect sensory data from environment is the driving force to build intelligence for 6G \cite{zongshu,isac1,gomore}, which also requires massive data transmission and extremely high data rate. Evolution of massive multiple-input multiple-output (MIMO) is viewed as a promising physical-layer technique enabling extremely high data rate \cite{mimo}. By scaling up the antenna array, the spectral efficiency of massive MIMO can be greatly improved due to high multiplexing gain. However, this performance gain comes at the cost of prohibitive hardware cost and high energy consumption, which hinders its use in practice.

As a remedy, reconfigurable intelligent surface (RIS), made up of a large number of low-cost passive reflecting antenna elements, has received worldwide attention. It is regarded as an important enabling key technology for the 6G network \cite{ris1,ris2,ris3}. Specifically, each reflecting element of RIS can independently induce phase changes to the incident electromagnetic waves \cite{ris3}, thereby realizing intelligent manipulation of the electromagnetic wave propagation environment. Due to its features as a  passive and low-cost device, RIS can be densely deployed to enhance the coverage of wireless networks with significantly reduced energy consumption. In particular, RIS has been widely used for signal enhancement\cite{wu}\cite{yao}, coverage expansion \cite{cover}, physical layer security\cite{sop}, and interference suppression \cite{inter}. Through passive beamforming (PBF) at RIS, we can dynamically control the signal-to-noise ratio (SNR) at receivers \cite{pbf2,pbf3}. However, these studies exploited RIS as purely a passive reflector to control the propagation environment, while it has been revealed in \cite{theo1} that RIS can transfer extra information in order to achieve higher communication rate close to the theoretical RIS channel capacity.

To realize the extra information transfer by RIS, there are mainly two types of schemes in literature, namely, spatial modulation (SM) and reflection modulation (RM). In the concept of SM, only a part of the antennas are activated and the additional information is implicitly transmitted by the indices of these activated antennas \cite{sm}. By exploiting RIS with SM, the beam can be better focused through intelligent reflecting at RIS, which provides new possibilities for SM. For instance, RIS-space shift keying (RIS-SSK) and RIS-spatial modulation (RIS-SM) schemes were proposed in \cite{rsm} by using the receive antenna indices for information transmission.  To further improve the performance, the power allocation and phase shifts at the RIS were jointly optimized in \cite{tsm} for separate  transmit SM (TSM) and receive SM (RSM). Moreover, an integration of  using both transmit and receive antenna indices for SM was discussed in \cite{gsm}. However, the achieved data rate of SM is greatly limited by the number of antennas, which is usually not large due to hardware cost. Additionally, in these RIS-assisted SM schemes, RIS itself does not transmit any information. Its essence is still to exploit PBF to realize intelligent control of the received SNR, which is proven strictly suboptimal in utilizing the RIS and cannot fully unlock its potential \cite{theo1}.

On the other hand, RM is another way of improving the spectral efficiency, which encodes the extra information in the reflection pattern of RIS \cite{xujindan,guoshuai}. By using RM, the RIS not only encodes the source information at the transmitter, but also delivers its local data as the extra information to the receiver, which is of great significance for the 6G network. For example, when sensors are deployed at the RIS, RM can be used to transmit the sensing environmental data to the receiver without deploying additional radio-frequency (RF) components and antennas \cite{pbit}. In addition,the potential of RM-enhanced full-duplex system has also been explored in \cite{stmm}. The optimality of modulating information in RIS phases has also been theoretically proven via rigorous channel capacity analysis \cite{theo1} and degree-of-freedom (DoF) analysis \cite{theo2}. To develop effective RM methods through RIS, multiple schemes were proposed  in  recent works. In \cite{pbit}, by combining the PBF and RM, a passive beamforming and information transfer (PBIT) scheme was proposed, where the ON/OFF states of each reflecting element was utilized  to deliver an additional binary message and PBF was shaped with the activated elements for performance improvement. However, since the number of activated elements is varying across transmitting symbols, the PBIT often suffers from high outage probability. To address this issue, a novel RIS-based reflection pattern modulation (RIS-RPM) scheme was proposed in \cite{rpm}, where a fixed number of reflecting elements are activated to ensure a stable communication link. Moreover, considering that turning off some reflecting elements leads to degraded signal power at the receiver, the authors in \cite{qrm} proposed to use the I/Q phases for all activated elements instead of the ON/OFF state to enhance performance. 
Then,  a novel RIS grouping method and a joint mapping codebook index efficient selection method are designed in \cite{gqrm} to improve the spectral efficiency. In \cite{yanwen}, the frequency-hopping states of RIS elements are exploited for modulating extra information, which achieves higher reflection efficiency than the ON/OFF states.
Moreover, in \cite{pm1} and \cite{pm2}, the extra information is transmitted by superimposing a specific phase rotation on the original phase. 
In order to boost a higher communication rate, the previous work \cite{my} proposed a framework of RM for an RIS-assisted multiple-input single-output (MISO) system to enable high-order modulation for RIS. Nevertheless, these RM-based schemes aimed at improving the received SNR rather than the joint encoding gain, which can cause the performance loss in terms of both ABER and capacity. Hence, optimization for minimizing the ABER is an intuitive and optimal way to improve the system performance.

In addition, these mentioned modulation schemes still face challenges in practice. Firstly, the modulation schemes for SM and RM generally rely on the assumption of accurate channel state information (CSI) at the transmitter, which is usually hard to obtain in practice \cite{xing}. Especially for RIS as a passive device, it does not support any signal transmission, reception, and processing. Besides, given that a large number of reflecting elements are deployed at RIS, the signalling overhead is considerable, which hinders the acquisition of accurate CSI involving the RIS. Moreover, regarding the detection for RM-based schemes, an effective detection method with low complexity is still missing. Existing schemes, like in \cite{rpm} and \cite{qrm}, mainly rely on the maximum likelihood (ML) detection suffering from exponentially high complexity. 

To address these challenges, we proposed a novel paradigm of RIS-based RM design that extends our previous work \cite{my} to a general MIMO case with more practical channel model. We characterized the ABER and ergodic capacity of the proposed scheme quantitatively. In addition, a low-complexity detection method is devised to approach the performance of ML detection. The main contributions of this paper are summarized as follows.
\begin{itemize}
\item To achieve higher communication rate, we propose a novel RIS phase modulation scheme for the extra information transfer by superimposing information-bearing phase offsets to the predetermined RIS phases. It is confirmed that the proposed scheme is a general framework for RM and retrieves most existing RM schemes as special cases. Furthermore, we derive the average bit error rate (ABER) and the ergodic capacity of the SRPM system in closed form under both spatially uncorrelated and correlated fading channels. We quantitatively discover the diversity order of the ABER under various setups for the RIS-assisted MIMO system. These theoretical results unveil that the proposed scheme achieves the maximal diversity order and exhibits significant superiority over the traditional PBF from the perspective of ergodic capacity. It also shows better applicability to practical cases with CSI uncertainty at the transmitter and discrete phases shifts at the RIS. 

\item Based on the derived ABER and ergodic capacity, we are able to further optimize the precoding at the BS for the single-stream case and pursue performance optimization. Specifically, we rewrite the ABER and ergodic capacity as a function of the precoding design, and formulate the precoding optimization problem as a nonconvex but quadratically constrained quadratic programming (QCQP) problem.  We exploit the semidefinite relaxation (SDR) technique to tackle this nonconvexity without loss of the optimality because the relaxation is fortunately proven to be tight for this problem. Interestingly, for a special case, the optimal solution of the precoding vector is derived in closed form.
 
\item To avoid the extremely high computational complexity of ML detection, we propose an effective sphere decoding (SD)-based layered detection algorithm for the RM-based schemes. We recast the original detection problem into a concise form suitable for the implementation of SD by rewriting the received signal. Considering that the transmitted message includes two parts, i.e., the message transmitted by BS and that conveyed by RIS phases, we construct a layered framework to search for valid symbols following the principle of SD, in which the symbols sent by BS and the message modulated into RIS phases are determined in~sequence.

\item Extensive numerical simulations are conducted to demonstrate the superiority of the proposed scheme and verify the derived analytical observations. In particular, we show that increasing the number of symbols is preferred to achieve a higher communication rate rather than increasing the modulation order. Moreover, we find that doubling the number of RIS elements brings approximately 3 dB reduction in terms of the transmit power for the simultaneous RIS modulation and reflection. It is also found that the proposed SD-based detector achieves near-ML performance with a much lower computational complexity.

\end{itemize}

The rest of this paper is organized as follows. In Section~\Rmnum{2}, we depict the system model and give the detail design of the proposed scheme. Section \Rmnum{3} derives the analytical ABER performance and the ergodic capacity of the proposed system. In Section~\Rmnum{4}, we formulate the precoding optimization problem based on the analytical results and obtain the optimal precoding design. In Section~\Rmnum{5}, we propose a SD-based layered detector for the proposed superimposed RIS-phase modulation scheme. Simulation results and conclusion are given in Sections \Rmnum{6} and \Rmnum{7}, respectively.

\textit{Notations:} $\mathbb{C}$ denotes the complex-valued space. $\mathbb{E}\{\cdot\}$ is the expectation operation. Operator $\mathrm{Re}(\cdot)$ returns the real part of an input complex number. $(\cdot)^T$, $(\cdot)^*$, and  $(\cdot)^H$ denote the transpose, conjugate, and conjugate transpose operations, respectively. $\vert \cdot \vert$ and  $\Vert \cdot \Vert$ return the modulus of a complex number and the  Euclidean norm of vectors, respectively. ${\rm{Tr}}(\cdot)$ denotes the trace of the input matrix. Operator ${\rm{diag}}\{\cdot\}$ denotes the diagonal operation. Operator $\lfloor\cdot \rfloor$ returns the largest integer less than or equal to the argument. Operator $\otimes$ represents the Kronecker product of two matrices. $\bm{I}_M$ stands for an $M\times M$ identity matrix. $\mathcal{CN}(\bm{\mu},\bm{\Sigma})$ is the distribution of a circularly symmetric complex Gaussian random vector with mean vector $\bm{\mu}$ and covariance matrix $\bm{\Sigma}$.

\section{System Model and Principle of SRPM}

\subsection{System Model}
As depicted in Fig. \ref{p1}, an RIS-aided downlink MIMO communication system is considered, where an RIS with $N$ passive reflecting elements is adopted to enhance the communication from an $N_t$-antenna BS to an $N_r$-antenna user. The RIS is connected to a controller that is used for controlling the adjustment of phase shift of each RIS element and exchanging information with the BS \cite{zhao}. We assume that the RIS is deployed in a vicinity of BS, in which case a reliable reflection link from BS to the user, low signalling overhead between the BS and the RIS controller, and low training overhead for channel estimation are available \cite{near}.  \par
\begin{figure}[!t]
\centering
\includegraphics[width=2.8in]{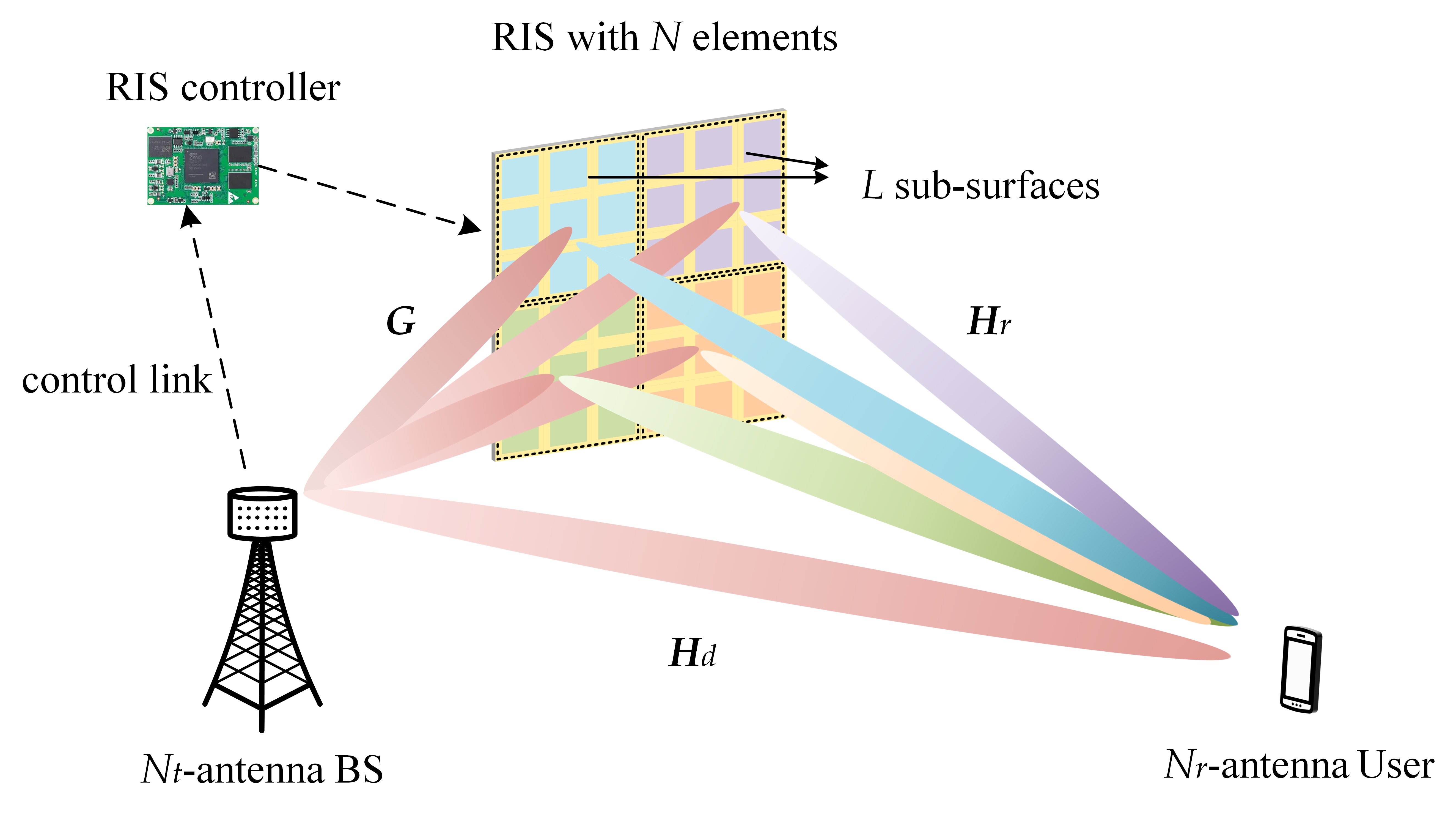}
\caption{An RIS-aided downlink MIMO communication system.}
\label{p1}
\vspace{-0.5cm}
\end{figure}

The channels from BS to the user, from RIS to the user, and from BS to RIS are denoted by $\bm{H}_d\in \mathbb{C}^{N_r\times N_t}$, $\bm{H}_r\in \mathbb{C}^{N_r\times N}$, and $\bm{G}\in \mathbb{C}^{N\times N_t}$, respectively. Since the BS and RIS are close to each other, together with their fixed locations, it is reasonable to assume that the channel between BS and RIS, $\bm{G}$, is a static line-of-sight (LoS) link. On the other hand, the channel between BS and the user and the channel between RIS and the user are modelled as Rayleigh channels for rich scattering in the propagation environment. 
Notably, we consider a general spatially-correlated Rayleigh fading channel. Concretely, we have 
\begin{align}
\bm{H}_d= \sqrt{\beta_d}\bm{R}_u^{1/2} \bm{H}_{\omega}^d \bm{R}_b^{1/2},\enspace \bm{H}_r= \sqrt{\beta_r}\bm{R}_u^{1/2} \bm{H}_{\omega}^r \bm{R}_r^{1/2},
\end{align}
where $\beta_d$ and $\beta_r$ are respectively the large-scale path losses of the direct channel and the channel between RIS and the user, $\bm{R}_b\in \mathbb{C}^{N_t\times N_t}$, $\bm{R}_r\in \mathbb{C}^{N\times N}$, and $\bm{R}_u\in \mathbb{C}^{N_r\times N_r}$ are the spatially correlation matrices at the BS, RIS, and user, respectively.  For mathematical tractability, the correlation matrices at the BS and the user are modeled by the commonly used Kronecker correlation model \cite{channelmodel}. Besides, the components of $\bm{H}_{\omega}^d\in \mathbb{C}^{N_r\times N_t}$ and $\bm{H}_{\omega}^r\in \mathbb{C}^{N_r\times N}$ in (1) are independent and identically distributed (i.i.d.) complex Gaussian matrices with zero mean and unity variance.

We denote the reflection-coefficients matrix at the RIS  by $\bm{\Theta}=\mathrm{diag}\left \{ \varsigma_1 e^{j\theta_1},\cdots,\varsigma_N e^{j\theta_N} \right \}$, where $ \varsigma_n$ and $\theta_n$ represent the amplitude reflection coefficient and the phase shift induced by the $n$th reflecting element. For a typical low-cost passive RIS,  we can normalize the amplitude reflection coefficient and set  $\varsigma_n=1$, $\forall n$. Let $\bm{W}\in \mathbb{C}^{N_t\times N_s}$ represent the precoding matrix with unity power at the BS, i.e., $\mathrm{Tr}\left(\bm{W}\bm{W}^H \right)=1$, and $N_s$ is the number of data streams. Then, we formulate the received signal at the user without extra information transfer via RIS as
\begin{align}\label{eq3}
\bm{y}=\sqrt{P} \left(\bm{H}_d+\bm{H}_r \bm{\Theta} \bm{G}\right )\bm{W}\bm{s} +\bm{z},
\end{align}
where $P$ is the transmit power, $\bm{s}=[s_1,\cdots,s_{N_s}]^T$ is the symbol vector transmitted by the BS, and $\bm{z}$ represents the additive Gaussian noise vector with zero mean and variance $\sigma^2$, i.e., $\bm{z}\sim \mathcal{CN}(\bm{0},\sigma^2 \bm{I}_{N_r})$. The symbol $s_i$ is selected from $M$-ary phase shift keying/quadrature amplitude modulation (PSK/QAM) constellations, satisfying $\mathbb{E}\{\vert s_i\vert^2\}=1$, $\forall i$.\par

\vspace{-0.7cm}
\subsection{Principle of Superimposed RIS-Phase Modulation (SRPM)}

In order to effectively transfer extra messages and achieve higher communication rate, we introduce the principle of the superimposed RIS-phase modulation, referred to as SRPM. Instead of using the RIS only for PBF, we realize  extra information transfer via the selection of superimposed RIS phase shifts. Specifically, in the SRPM scheme, the phase configured at the RIS is made up of two parts, i.e., the base phases and the superimposed phase offsets. The base phase is predetermined and remains unchanged during each symbol period. On the other hand, the superimposed phase offsets is dynamically added to the base phases. The superimposed phase offsets are selected from a specific set, according to the extra information to be transferred from the RIS. Concretely, the phase on the $n$th reflecting element of RIS is configured as
\begin{align}\label{eq4}
\theta_n=\theta_n^{b}+k_n \Delta\theta, 
\end{align}
where $\theta_n^b$ is the predetermined base phase of the $n$th RIS reflecting element, $k_n$ is selected from the set $\mathcal{K}=\{-K,\cdots,0,1,\cdots,K\}$, and $\Delta \theta$ is the unit step  of the superimposed  phase offsets. Defining $K$ as the modulation order for conveying the extra information at RIS, the maximum phase offset is $K\Delta \theta$. Using the superimposed phase in (\ref{eq4}), the received signal in (\ref{eq3}) becomes
\begin{align}\label{eq5}
\bm{y}&=\sqrt{P} \left( \bm{H}_d +\bm{H}_r \mathrm{diag}\{ e^{j\theta_1},\cdots,e^{j\theta_N}\} \bm{G} \right ) \bm{W} \bm{s} +\bm{z}=\sqrt{P} \left( \bm{H}_d +\bm{H}_r \bm{\Theta}^b \bm{\Xi} \bm{G} \right)\bm{W} \bm{s} +\bm{z},
\end{align}
where $\bm{\Theta}^b\triangleq \mathrm{diag}\{ e^{j\theta_1^b},\cdots,e^{j\theta_N^b}\}$ is the base phase shift matrix and $\bm{\Xi}\triangleq {\rm{diag}}\{e^{j k_1 \Delta \theta},\cdots, e^{j k_n \Delta \theta}\}$ is the superimposed phase offset matrix conveying the additional information.

Considering the surface grouping for reducing training overhead and simplifying the RM design, we group the surface elements into $L$ sub-surfaces and each sub-surface contains $N/L$ elements, where $N/L$ is assumed as an integer for simplicity\cite{group}. Note that for $L=N$, this reduces to the conventional setup without RIS grouping. Then, we consider that the reflecting elements in the same sub-surface are configured with the same phase offset. Let $\mathcal{A}_l$ $l\in [1:L]$ represent the set of antenna elements in the $l$th sub-surface and the common phase offset configured in this set is denoted by $k_l\Delta \theta$. Hence, the extra information matrix, $\bm{\Xi}$, becomes $\mathrm{diag}\{ \bm{v} \} \otimes \bm{I}_{\frac{N}{L}}$, where $ \bm{v}\triangleq \left[e^{j k_1 \Delta \theta}, \cdots,e^{j k_L \Delta \theta}\right]^T$ is the information vector transmitted by the RIS. 
We stress that the proposed SRPM scheme is a general framework for RM with RIS. By adjusting the SRPM parameters $K$, $\Delta \theta$, and $\theta_n^b$, we are able to include existing schemes as special cases of the proposed SRPM. For example, in the previous work \cite{my}, basic phases $\theta_n^b$, $\forall n$ are chosen as the conventional optimized phases for PBF. In addition, the proposed SRPM scheme is also effective for more practical discrete phase shifts. Specifically, when discrete phase shifts are quantized by $b$ bits, $\Delta \theta$ can be selected as an integer multiple of $\frac{\pi}{2^{b-1}}$ and $K$ satisfies $K\leq 2^{b-1}$. When we set $\Delta \theta=\frac{\pi}{2^{b-1}}$ and $K= 2^{b-1}$, the proposed SRPM  reduces to the case of RM for theoretical analysis in \cite{theo1}.

At the receiver, we consider to detect $\bm{s}$ and $\bm{v}$ following the principle of the maximum likelihood (ML), which is expressed as
\begin{align}\label{mldetec}
(\hat{\bm{s}},\hat{\bm{v}})=\arg \min_{\bm{s},\bm{v}} \left \Vert \bm{y}-\sqrt{P}  \left( \bm{H}_d +\bm{H}_r \bm{\Theta}^b \bm{\Xi} \bm{G} \right)\bm{W} \bm{s}\right \Vert^2.
\end{align}
Considering the computational complexity, it is necessary to traverse all possible realizations of $\bm{s}$ and $\bm{v}$, leading to the complexity of the ML detector at the order of $\mathcal{O}(M^{N_s} (2K+1)^L)$ \cite{qrm}.

It is worth noting that the proposed SRPM imposes no additional communication overhead between the BS and the RIS controller. The only additional requirement of SRPM is marginal signal processing needed by the RIS controller which is at most limited to searching a small-size look-up table before the common phase-shift control. Hence, the proposed SRPM fully exploits the passive trait of RIS and successfully avoids excessive energy consumption brought by active devices. As for the active RIS case, as long as the reflection coefficient at each RIS element is uniformly amplified, the proposed modulation design is directly applicable.
Furthermore, the SRPM can also be applied to the multiuser systems by using a time division multiple access (TDMA) method \cite{qrm}, where multiple users are activated within different time resource blocks.


\vspace{-0.3cm}
\section{Performance Analysis for the Proposed SRPM scheme}
In this section, we analyze the ABER and ergodic capacity performance of the proposed SRPM scheme with the ML detection, which paves the way for  useful insights. 
\vspace{-0.5cm}
\subsection{ABER Analysis}
To begin with, the conditional pairwise error probability (CPEP) of detecting $(\bm{s},\bm{v})$ from (\ref{mldetec}) as any given $(\hat{\bm{s}},\hat{\bm{v}})$ is expressed by
\vspace{-0.2cm}
\begin{align}
&\Pr \left((\bm{s},\bm{v}) \to (\hat{\bm{s}},\hat{\bm{v}}) \vert \bm {H}_d,\bm{H}_r\right )\nonumber \\
&={\rm{Pr}}\left( \left\Vert \bm{y}-\sqrt{P}\left(\bm{H}_d+\bm{H}_r \bm{\Theta}^b \bm{\Xi} \bm{G}\right )\bm{W}\bm{s} \right \Vert^2 >\left\Vert \bm{y}-\sqrt{P}\left(\bm{H}_d+\bm{H}_r \bm{\Theta}^b \hat{\bm{\Xi}} \bm{G}\right )\bm{W}\hat{\bm{s}} \right \Vert^2  \right)\nonumber \\
&\overset{(a)}{=}{\rm{Pr}}\left( \left\Vert \bm{z} \right \Vert^2 >\left\Vert \bm{z}+\sqrt{P}\left (\left(\bm{H}_d+\bm{H}_r \bm{\Theta}^b \bm{\Xi} \bm{G}\right )\bm{W}\bm{s}-\left(\bm{H}_d+\bm{H}_r \bm{\Theta}^b \hat{\bm{\Xi}} \bm{G}\right )\bm{W}\hat{\bm{s}}
 \right)\right \Vert^2 \right)\nonumber \\
&=\Pr\left( -P \Vert \bm{\delta} \Vert^2 -2\sqrt{P }{\rm{Re}}\left\{ \bm{\delta}^H \bm{z} \right\}>0\right),
\end{align}
\vspace{-0.2cm}
where $(a)$ comes from (\ref{eq5}), $\bm{z}$ is the additive Gaussian noise vector, $\hat{\bm{\Xi}}\triangleq \mathrm{diag}\left \{\hat{\bm{v}}\right\} \otimes \bm{I}_{N/L}$, and 
\begin{align}\label{delta}
\bm{\delta}\triangleq \left(\bm{H}_d+\bm{H}_r \bm{\Theta}^b \bm{\Xi} \bm{G}\right )\bm{W}\bm{s}-\left(\bm{H}_d+\bm{H}_r \bm{\Theta}^b \hat{\bm{\Xi}} \bm{G}\right )\bm{W}\hat{\bm{s}}.
\end{align}
Considering that the channels are given, we find that $  -P \Vert \bm{\delta} \Vert^2 -2\sqrt{P }{\rm{Re}}\left\{ \bm{\delta}^H \bm{z} \right\}$ is a real Gaussian random variable with mean $ -P \Vert \bm{\delta} \Vert^2 $ and variance $2 P \sigma^2 \Vert \bm{\delta} \Vert^2$. Hence, according to \cite[Eq.~(A.37)]{fun}, we obtain
\begin{align}\label{equa6}
\Pr\left ((\bm{s},\bm{v}) \to (\hat{\bm{s}},\hat{\bm{v}}) \vert \bm {H}_d,\bm{H}_r\right)=\mathcal{Q} \left(\sqrt{\frac{P\Vert \bm{\delta}\Vert^2 }{2\sigma^2}} \right)=\mathcal{Q} \left(\sqrt{\frac{P   \lambda}{2\sigma^2}} \right).
\end{align}
where $\mathcal{Q}(\cdot)$ is the Gaussian Q-function \cite[Eq. (2)]{approx} and $\lambda \triangleq \Vert \bm{\delta} \Vert^2$. Then, in order to derive the average pairwise error probability (APEP), we need to first get the distribution of $\lambda$. For any fixed channel $\bm{G}$ as well as the corresponding $\bm{W}$ and $\bm{\Theta}^b$, we derive the moment-generating function (MGF) of $\lambda$ in \emph{Lemma 1}.
\vspace{-0.3cm}
\begin{lemma}  The MGF of $\lambda$ is 
\begin{align}\label{new8}
\mathcal{M}_\lambda(t)=\prod_{i=1}^{N_r} (1-\lambda_i t)^{-1},
\end{align}
where $\{\lambda_i\}_{i=1}^{N_r}$ are defined in Appendix A. Particularly, for ideal uncorrelated channels, variable $\lambda$ is a Chi-squared random variable with the degrees of freedom $2N_r$ and its MGF reduces to
\vspace{-0.3cm}
\begin{align}
\mathcal{M}_\lambda(t)= (1-\sigma_d^2 t)^{-N_r},
\end{align}
\vspace{-0.3cm}
where $\sigma_d^2$ is a constant defined in Appendix A.
\end{lemma}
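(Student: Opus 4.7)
The plan is to view $\bm{\delta}$, for fixed $\bm{G}$, $\bm{W}$, $\bm{\Theta}^b$, and any chosen symbol pair, as a linear image of the Gaussian matrices $\bm{H}_\omega^d$ and $\bm{H}_\omega^r$; hence $\bm{\delta}$ is itself a zero-mean complex Gaussian vector, and the MGF of $\lambda=\bm{\delta}^H\bm{\delta}$ reduces to that of a Hermitian quadratic form in a standard complex Gaussian vector. First I would regroup (\ref{delta}) as $\bm{\delta}=\bm{H}_d\bm{a}+\bm{H}_r\bm{b}$, where $\bm{a}\triangleq \bm{W}(\bm{s}-\hat{\bm{s}})\in\mathbb{C}^{N_t}$ and $\bm{b}\triangleq \bm{\Theta}^b(\bm{\Xi}\bm{G}\bm{W}\bm{s}-\hat{\bm{\Xi}}\bm{G}\bm{W}\hat{\bm{s}})\in\mathbb{C}^{N}$ are deterministic under the conditioning.

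Next, I would exploit the Kronecker structure in (1). Since $\bm{H}_d=\sqrt{\beta_d}\bm{R}_u^{1/2}\bm{H}_\omega^d\bm{R}_b^{1/2}$ and $\bm{H}_\omega^d$ has i.i.d.\ $\mathcal{CN}(0,1)$ entries, $\bm{H}_d\bm{a}$ is $\mathcal{CN}(\bm{0},\,\beta_d(\bm{a}^H\bm{R}_b\bm{a})\bm{R}_u)$; by the same token, $\bm{H}_r\bm{b}\sim\mathcal{CN}(\bm{0},\,\beta_r(\bm{b}^H\bm{R}_r\bm{b})\bm{R}_u)$. Independence of $\bm{H}_\omega^d$ and $\bm{H}_\omega^r$ then delivers $\bm{\delta}\sim\mathcal{CN}(\bm{0},\bm{\Sigma})$ with
$$\bm{\Sigma}=c\,\bm{R}_u,\qquad c\triangleq \beta_d\,\bm{a}^H\bm{R}_b\bm{a}+\beta_r\,\bm{b}^H\bm{R}_r\bm{b}.$$
The key structural observation is that the receive-side correlation $\bm{R}_u$ factors out of both summands, so that $\bm{\Sigma}$ collapses to a scalar multiple of $\bm{R}_u$. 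This is the step I expect to be the main obstacle: once the common $\bm{R}_u$ is isolated, the remainder of the proof becomes mechanical, but it requires a careful bookkeeping of how the two independent Kronecker-structured channels align at the receiver.

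Finally, I would diagonalize $\bm{\Sigma}=\bm{U}\bm{\Lambda}\bm{U}^H$, identify its eigenvalues $\{\lambda_i\}_{i=1}^{N_r}$ with those in Appendix~A, and write $\lambda=\sum_{i=1}^{N_r}\lambda_i|w_i|^2$ for i.i.d.\ standard complex Gaussian $w_i$. Each $|w_i|^2$ is exponential with unit rate, whose MGF is $(1-t)^{-1}$, so independence turns the sum into the advertised product $\prod_{i=1}^{N_r}(1-\lambda_i t)^{-1}$. For the uncorrelated specialization, substituting $\bm{R}_u=\bm{I}_{N_r}$, $\bm{R}_b=\bm{I}_{N_t}$, $\bm{R}_r=\bm{I}_N$ forces $\lambda_i\equiv c=\sigma_d^2\triangleq \beta_d\|\bm{a}\|^2+\beta_r\|\bm{b}\|^2$, so $\lambda/\sigma_d^2$ is a $\chi^2_{2N_r}$ (up to the conventional scale), and the product collapses to $(1-\sigma_d^2 t)^{-N_r}$, matching the stated form.
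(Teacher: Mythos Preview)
Your proposal is correct and follows essentially the same strategy as the paper: establish that $\bm{\delta}$ is zero-mean complex Gaussian for fixed $\bm{G}$, $\bm{W}$, $\bm{\Theta}^b$, compute its covariance, and then read off the MGF of the Hermitian quadratic form $\lambda=\bm{\delta}^H\bm{\delta}$. The paper carries this out element-by-element, writing $\delta_i$ as an explicit linear combination of the i.i.d.\ entries $h_{\omega,\imath,\jmath}^d$ and $h_{\omega,\imath,\jmath}^r$, computing $\mathbb{E}\{\delta_i\delta_j^*\}$ entrywise, and then invoking the standard identity $\mathcal{M}_\lambda(t)=\det(\bm{I}_{N_r}-t\bm{C})^{-1}$ for Gaussian quadratic forms. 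Your matrix-level argument via the Kronecker model is cleaner and in fact extracts a structural fact the paper leaves implicit: the covariance collapses to $\bm{\Sigma}=c\,\bm{R}_u$ with $c=\beta_d\,\bm{a}^H\bm{R}_b\bm{a}+\beta_r\,\bm{b}^H\bm{R}_r\bm{b}$, so that the $\lambda_i$ in Appendix~A are simply $c$ times the eigenvalues of $\bm{R}_u$. Your uncorrelated specialization $\sigma_d^2=\beta_d\|\bm{a}\|^2+\beta_r\|\bm{b}\|^2$ matches the paper's $\sigma_d^2$ once one unpacks its definitions of $c_{p,m}$ and $d_{l,n}$.
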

\vspace{-0.3cm}
\begin{proof}
Please refer to Appendix A. $\hfill\square$
\end{proof}

\vspace{-0.3cm}
Then by applying (\ref{new8}) in (\ref{equa6}), the APEP of SRPM is further rewritten as
\vspace{-0.3cm}
\begin{align} \label{apep}
&\Pr\left((\bm{s},\bm{v}) \to (\hat{\bm{s}},\hat{\bm{v}})\right)=\mathbb{E}\left\{ \Pr\left((\bm{s},\bm{v}) \to (\hat{\bm{s}},\hat{\bm{v}}) \vert \bm {H}_d,\bm{H}_r\right ) \right \} =\int_{0}^{+\infty} \mathcal{Q} \left(\sqrt{\frac{P \lambda}{2\sigma^2}} \right) f_{\lambda} (\lambda)\, \mathrm{d}\lambda\nonumber \\
&\quad \overset{(a)}{\approx} \int_{0}^{+\infty} \left(\frac{1}{12}e^{-\frac{P \lambda}{4\sigma^2}}+\frac{1}{4}e^{-\frac{P \lambda}{3\sigma^2}} \right) f_{\lambda} (\lambda) \enspace \mathrm{d}\lambda =\frac{1}{12}\mathcal{M}_\lambda \left(-\frac{P  }{4\sigma^2} \right)+\frac{1}{4}\mathcal{M}_\lambda \left(-\frac{P}{3\sigma^2} \right)\nonumber\\
&\quad =\frac{1}{12}\prod_{i=1}^{N_r}\left(1+\frac{P\lambda_i}{4\sigma^2} \right)^{-1}+\frac{1}{4}\prod_{i=1}^{N_r}\left(1+\frac{P\lambda_i}{3\sigma^2} \right)^{-1},
\end{align}
where $f_{\lambda}(\cdot)$ is the probability distribution function (PDF) of $\lambda$, and $(a)$ exploits the fact that the Q-function is well approximated as $\mathcal{Q}(x)\approx \frac{1}{12}e^{-\frac{x^2}{2}}+\frac{1}{4}e^{-\frac{2x^2}{3}}$ \cite{approx}. It is worth noting that this approximation is tight for large $x$ which therefore ensures its effectiveness and accuracy in evaluating the diversity order.

Now, by accumulating the APEP for all possible cases, we obtain a union bound of ABER~as
\vspace{-0.4cm} 
\begin{align} \label{eq17}
P_b\leq& \frac{1}{M^{N_s} (2K+1)^L}\sum_{\bm{s}} \sum_{\bm{v}} \sum_{\hat{\bm{s}}} \sum_{\hat{\bm{v}}}{\rm{Pr}}((\bm{s},\bm{v}) \to (\hat{\bm{s}},\hat{\bm{v}})) \frac{e((\bm{s},\bm{v}) \to (\hat{\bm{s}},\hat{\bm{v}}))}{N_s\log_2 M+L\lfloor\log_2(2K+1)\rfloor},
\end{align}
where $e((\bm{s},\bm{v}) \to (\hat{\bm{s}},\hat{\bm{v}}))$ stands for the number of error bits when detecting $(\bm{s},\bm{v})$ as $(\hat{\bm{s}},\hat{\bm{v}})$. The number of error bits is upto $\log_2 M$ for each misestimated $\hat{s}_i$, and  the number of error bits is upto $\lfloor\log_2(2K+1)\rfloor$ for each misestimated $\hat{v}_l$. Thus, $e(\bm{s},\bm{v} \to \hat{\bm{s}},\hat{\bm{v}})$ is calculated by accumulating all these possible error bits. By analyzing (\ref{eq17}), the diversity order of the proposed SRPM is revealed in \emph{Theorem~1}.
\vspace{-0.3cm}
\begin{theorem}
Given the precoding at BS and base phase shifts at RIS, the diversity order of the proposed SRPM is equal to $N_r$, which is optimal for all legitimate system parameter settings.
\end{theorem}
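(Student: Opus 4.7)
The plan is to read off the high-SNR scaling of the ABER directly from the APEP expression in~(\ref{apep}), thereby establishing achievability of diversity order $N_r$, and then to match this with a converse coming from the dimensionality of the receive-side noise.

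First, I would take the limit $P/\sigma^2\to\infty$ in
\[
\Pr\!\bigl((\bm{s},\bm{v})\to(\hat{\bm{s}},\hat{\bm{v}})\bigr)\approx \frac{1}{12}\prod_{i=1}^{N_r}\!\Bigl(1+\tfrac{P\lambda_i}{4\sigma^2}\Bigr)^{-1}+\frac{1}{4}\prod_{i=1}^{N_r}\!\Bigl(1+\tfrac{P\lambda_i}{3\sigma^2}\Bigr)^{-1}.
\]
Each factor then behaves as $\tfrac{P\lambda_i}{c\sigma^2}$, so both terms scale as $(P/\sigma^2)^{-N_r}$, provided the $N_r$ eigenvalues $\{\lambda_i\}_{i=1}^{N_r}$ from Lemma~1 are all strictly positive. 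The uncorrelated case confirms this picture cleanly, since $\mathcal{M}_\lambda(t)=(1-\sigma_d^2 t)^{-N_r}$ has $N_r$ coincident poles and exhibits diversity $N_r$ at once.

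Second, I would verify strict positivity of every $\lambda_i$ for each error event $(\bm{s},\bm{v})\neq(\hat{\bm{s}},\hat{\bm{v}})$. From~(\ref{delta}), $\bm{\delta}$ is a linear image of the i.i.d.\ Gaussian entries of $\bm{H}_\omega^d$ and $\bm{H}_\omega^r$, with coefficient matrices involving $\bm{W}$, $\bm{G}$, $\bm{\Theta}^b$, $\bm{\Xi}-\hat{\bm{\Xi}}$, and $\bm{s}$, $\hat{\bm{s}}$. Because $\bm{R}_u$, $\bm{R}_b$, and $\bm{R}_r$ are full-rank by assumption and at least one of $\bm{W}\bm{s}\neq\bm{W}\hat{\bm{s}}$ or $\bm{\Xi}\neq\hat{\bm{\Xi}}$ must hold, the conditional covariance of $\bm{\delta}$ has rank $N_r$, making all $\lambda_i>0$. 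Hence each APEP is $\Theta\!\bigl((P/\sigma^2)^{-N_r}\bigr)$, and substituting into the union bound~(\ref{eq17}) preserves this scaling because the outer sum contains only finitely many terms independent of SNR, yielding $P_b=\Theta\!\bigl((P/\sigma^2)^{-N_r}\bigr)$ and a diversity order of exactly $N_r$.

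Finally, for the optimality claim I would observe that the receive signal $\bm{y}$ lives in $\mathbb{C}^{N_r}$, so the pairwise-error statistic $\lambda=\|\bm{\delta}\|^2$ is a Hermitian quadratic form in at most $N_r$ independent complex Gaussian components; its MGF can therefore have at most $N_r$ poles regardless of how $\bm{W}$ and $\bm{\Theta}^b$ are chosen, capping the diversity at $N_r$ through the same high-SNR expansion used above. I expect the main obstacle to be the positivity step: rigorously confirming that all $N_r$ eigenvalues in Lemma~1 are strictly positive for every legitimate error event requires careful bookkeeping of the definitions in Appendix~A together with the full-rank assumptions on the correlation matrices, and it is exactly this nondegeneracy that makes the achievability side tight.
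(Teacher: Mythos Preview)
Your proposal is correct and follows essentially the same route as the paper: extract the high-SNR decay of each APEP from~(\ref{apep}) and read off the exponent $N_r$. The paper's own proof is in fact much terser than yours---it simply states that $\lim_{P\to\infty}-\log_2\Pr((\bm{s},\bm{v})\to(\hat{\bm{s}},\hat{\bm{v}}))/\log_2 P=N_r$ ``if the channel is not deficient'' and then declares the result holds regardless of $L,M,N_s,K,\Delta\theta$; it neither unpacks the nondegeneracy condition nor supplies an explicit converse. Your additional bookkeeping on strict positivity of the $\lambda_i$ and your upper-bound argument via the dimensionality of the receive space are thus refinements beyond what the paper actually proves, not deviations from its approach.
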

\begin{proof}
Considering the asymptotic case with large transmit power, i.e., $P/\sigma^2 \to \infty$, we derive the diversity order which by checking that
\begin{align}
\lim\limits_{P\to \infty} -\frac{\log_2 \Pr\left((\bm{s},\bm{v})\to (\hat{\bm{s}} ,\hat{\bm{v}})\right)}{\log_2 P}=N_r.
\end{align}
which holds if the channel is not deficient. Therefore, we conclude that the diversity order of the proposed SRPM scheme is equal to $N_r$ regardless of the other parameter settings of, e.g., $L$, $M$, $N_s$, $K$, and $\Delta \theta$. The proof completes. \hfill $\square$
\end{proof}

\vspace{-0.4cm}
Following are some immediate remarks that can be useful for RIS deployment in practice.
\vspace{-0.4cm}

\begin{remark}
Compared with \cite{my}, it is observed that a higher diversity order can be achieved by arbitrarily given precoding vector and base phase shifts, rather than the optimal ones for PBF. This is because the optimal phase shift rotates the channel coefficients into real numbers, which maximizes the SNR at the expense of the loss of DoF. To be specific, we take a single-input single-output (SISO) system without direct link between BS and RIS as an example. The received signal in (\ref{eq5}) reduces to
$y=\beta_r \sum_{l=1}^L e^{j k_l\Delta \theta} s\sum_{n\in \mathcal{A}_l}h_{r,n} e^{j \theta_n^b}g_n+z$,
where $h_{r,n}\sim\mathcal{CN}(0,1)$ and $g_n$ is a constant. To maximize the SNR, the optimal base phase shifts for PBF is selected as $\theta_n^b =-\angle( h_{r,n} g_n)$ for all $n$ \cite{wu}. Meanwhile, the random variable $\sum_{n\in \mathcal{A}_l}h_{r,n} e^{j \theta_n^b}g_n$ becomes a real value, which is approximated by a Gaussian variable \cite{my}. Therefore, the variable $\lambda$ in (\ref{equa6}) reduces to a Chi-squared random variable with the degrees of freedom {1}. On the contrary, with a fixed $\theta_n^b$, the random variable $\sum_{n\in \mathcal{A}_l}h_{r,n} e^{j \theta_n^b}g_n$ is a complex Gaussian variable and variable $\lambda$ in (\ref{equa6}) is  with the degrees of freedom {2}, which explains the increase in diversity order. Similar to the trade-off between diversity gain and multiplexing gain, selecting the optimal phase shifts for PBF achieves better ABER at low SNR levels, while an arbitrary design of fixed phase shifts exhibits better ABER at high SNR regimes. 
\end{remark}
\vspace{-0.5cm}
\begin{remark}
It is worth noting that, compared with existing works, such as PBF\cite{pbf2,pbf3}, SM \cite{rsm,tsm,gsm}, and RM \cite{pbit}\cite{rpm}\cite{qrm}, the proposed SRPM scheme is more appealing to practical cases where only partial CSI is available at the transmitter and the phase shifts are discrete at RIS. Note that the diversity order is not affected by the precoding matrix and the base phase shifts. In other words, the inaccurate CSI at the transmitter and the discrete phase shifts at RIS do not harm the diversity order.
\end{remark}
\vspace{-0.5cm}
\begin{remark}
It is found that the ABER relates to the SRPM parameters, including $L$, $M$, $N_s$, $K$, and $\Delta \theta$, which can be optimized for performance gains. For the optimization of $(L,M,N_s,K,\Delta\theta)$, especially for the discrete phase shifts, an exhaustive search is able to find the optimal solution. 
\end{remark}


\vspace{-0.5cm}

\subsection{Ergodic Capacity Analysis}
Considering a typical case where discrete constellations are exploited to transmit messages with the proposed SRPM, we evaluate the discrete-input continuous-output memoryless channel (DCMC) capacity, which is defined as \cite{capacity1,capacity2}
\vspace{-0.3cm}
\begin{align}
C_{\mathrm{SRPM}}=\mathbb{E}\left \{ \max_{p_{\bm{s}}(\bm{s}),\,p_{\bm{v}}(\bm{v})} \mathcal{I}(\bm{s},\bm{v};\bm{y}) \right \},
\end{align}
where $p_{\bm{s}}(\cdot)$ and $p_{\bm{v}}(\cdot)$ are the probability distributions of independent random variables $\bm{s}$ and $\bm{v}$, respectively, and $\mathcal{I}(\bm{s},\bm{v};\bm{y})$ is the mutual information between the symbols, $(\bm{s},\bm{v})$, and the received signal, $\bm{y}$. The DCMC capacity of SRPM is derived in the following theorem.
\vspace{-0.3cm}
\begin{theorem}
The DCMC capacity of SRPM is characterized as
\vspace{-0.3cm}
\begin{align}\label{capacity}
C_{\mathrm{SRPM}}&\approx 2\log_2 S-\log_2 \left(S+\sum_{\bm{s}} \sum_{\bm{v}}\sum_{\hat{\bm{s}}\neq \bm{s}} \sum_{\hat{\bm{v}}\neq \bm{v}} \mathcal{M}_{\lambda}\left(-\frac{P}{2\sigma^2}\right)\right),
\end{align}
where $S$ is defined in Appendix B.
\end{theorem}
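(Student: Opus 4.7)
The plan is to start from the standard decomposition $\mathcal{I}(\bm{s},\bm{v};\bm{y}) = H(\bm{y}) - H(\bm{y}\mid \bm{s},\bm{v})$ evaluated conditional on the channels, and then take the ergodic expectation at the end. For fixed $(\bm{s},\bm{v})$ the received vector is merely the noise $\bm{z}$ shifted by the deterministic mean $\bm{\mu}(\bm{s},\bm{v})\triangleq \sqrt{P}(\bm{H}_d+\bm{H}_r\bm{\Theta}^b\bm{\Xi}\bm{G})\bm{W}\bm{s}$, so the conditional differential entropy collapses to $H(\bm{z}) = N_r\log_2(\pi e\sigma^2)$. For the symmetric DCMC signalling considered here, the capacity-achieving prior is uniform over the $S$ symbol tuples, whence $p(\bm{y}) = \frac{1}{S}\sum_{(\bm{s},\bm{v})} p(\bm{y}\mid \bm{s},\bm{v})$ and $H(\bm{y}) = \log_2 S - \mathbb{E}_{\bm{y}}\{\log_2 \sum_{(\bm{s},\bm{v})} p(\bm{y}\mid \bm{s},\bm{v})\}$.

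Next, I would apply Jensen's inequality (concavity of $\log_2$) to pull $\mathbb{E}_{\bm{y}}$ past the logarithm, obtaining $H(\bm{y}) \ge \log_2 S - \log_2 \sum_{(\hat{\bm{s}},\hat{\bm{v}})}\mathbb{E}_{\bm{y}}\{p(\bm{y}\mid \hat{\bm{s}},\hat{\bm{v}})\}$. Because $\mathbb{E}_{\bm{y}}\{p(\bm{y}\mid \hat{\bm{s}},\hat{\bm{v}})\} = \frac{1}{S}\sum_{(\bm{s},\bm{v})}\int p(\bm{y}\mid \bm{s},\bm{v})\,p(\bm{y}\mid \hat{\bm{s}},\hat{\bm{v}})\,d\bm{y}$, each inner term is a Gaussian--Gaussian overlap integral; completing the square in $\bm{y}$ yields the closed form $\frac{1}{(2\pi\sigma^2)^{N_r}}\exp\!\bigl(-\|\bm{\mu}(\bm{s},\bm{v})-\bm{\mu}(\hat{\bm{s}},\hat{\bm{v}})\|^2/(2\sigma^2)\bigr)$. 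Using (\ref{delta}) this mean-distance is exactly $P\lambda$, so the exponent already matches the argument of $\mathcal{M}_\lambda$ in \emph{Lemma 1}. Subtracting $H(\bm{y}\mid \bm{s},\bm{v})$ and collecting normalization constants produces a conditional lower bound of the form $2\log_2 S + N_r\log_2(2/e) - \log_2 \sum_{(\bm{s},\bm{v})}\sum_{(\hat{\bm{s}},\hat{\bm{v}})} e^{-P\lambda/(2\sigma^2)}$; the offset $N_r\log_2(2/e)$ is SNR-independent and is precisely what the ``$\approx$'' in the theorem suppresses.

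The final step is to take the ergodic expectation over $(\bm{H}_d,\bm{H}_r)$ and apply Jensen one more time to swap the channel expectation with $\log_2$, after which each summand becomes $\mathbb{E}\{e^{-P\lambda/(2\sigma^2)}\} = \mathcal{M}_\lambda(-P/(2\sigma^2))$ by the definition of the moment-generating function established in \emph{Lemma 1}. Isolating the $S$ diagonal terms where $(\hat{\bm{s}},\hat{\bm{v}}) = (\bm{s},\bm{v})$ (for which $\lambda = 0$ and the exponential equals one) from the remaining off-diagonal double sum reproduces the expression in the theorem with $S$ as defined in Appendix~B. The main obstacle I expect is not any single algebraic step but rather controlling the cumulative slack introduced by the three Jensen applications together with the dropped $N_r\log_2(2/e)$ constant; I would argue that these are negligible compared with the logarithm of the exponentially-dominated sum in the moderate-to-high SNR regime, thereby substantiating the approximation sign in the statement.
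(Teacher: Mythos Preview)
Your proposal is correct and follows essentially the same route as the paper's proof in Appendix~B: the chain-rule decomposition $\mathcal{I}=\mathcal{H}(\bm{y})-\mathcal{H}(\bm{y}\mid\bm{s},\bm{v})$, the reduction of the conditional term to $N_r\log_2(\pi e\sigma^2)$, the Jensen/R\'enyi-2 lower bound $\mathcal{H}(\bm{y})\ge-\log_2\!\int f_{\bm{y}}^2$, the Gaussian--Gaussian overlap integral yielding $\exp(-P\lambda/(2\sigma^2))$, separation of the $S$ diagonal terms, and finally the channel expectation producing $\mathcal{M}_\lambda(-P/(2\sigma^2))$. Your identification of the suppressed $N_r\log_2(2/e)$ offset is exactly what the paper absorbs into the ``$\approx$'' via the cited approximation, and the only cosmetic difference is that the paper writes the Jensen step directly on $-\!\int f\log f$ rather than first extracting the $\log_2 S$ prefactor.
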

\vspace{-0.3cm}
\begin{proof}
Please refer to Appendix B.\hfill $\square$
\end{proof}
\vspace{-0.3cm}
To obtain more insightful observations from (\ref{capacity}), the asymptotic capacity limit of the proposed SRPM tends to
\vspace{-0.3cm}
\begin{align}
\lim_{P\to \infty} C_{\mathrm{SRPM}}=\log_2 S=N_s\log_2 M +L\log_2(2K+1).
\end{align}
On the other hand, the high SNR capacity limit of the traditional PBF equals to $N_s\log_2 M$ \cite[Eq. (42)]{theo1}. It is obvious that in the high SNR regime, the proposed SRPM increases the achievable rate by additional $L\log_2(2K+1)$ bits per channel use (bpcu) as compared to the traditional PBF strategy with the same modulation order at the BS.

\vspace{-0.4cm}
\section{Precoding Optimization for Single-stream Cases}
\vspace{-0.3cm}
Based on the derived results in Section \Rmnum{3}, we are able to optimize the precoding at BS  to further improve  the ABER and ergodic performance. In this section, we consider the single-stream case, i.e., $N_s=1$, and the ideal uncorrelated channels, which facilitates the optimal design for the precoding vector.

Firstly, we reformulate the ABER as a function of the precoding, $\bm{w}$, in the following lemma.
\vspace{-0.7cm}
\begin{lemma}
For the single-stream case with uncorrelated channels, the ABER of the proposed SRPM scheme is rewritten as
\begin{align}\label{eq18}
P_b \leq \frac{1}{rS}\sum_{s} \sum_{\bm{v}} \sum_{\hat{s}} \sum_{\hat{\bm{s}}} e((s,\bm{v})\to (\hat{s},\hat{\bm{v}}))\sum_{i=1}^Q a_i \left( 1+\frac{P b_i}{2\sigma^2}\left(\beta_d \vert s-\hat{s}\vert^2 +\bm{w}^H \bm{A}_{(s,\bm{v})\to (\hat{s},\hat{\bm{v}})}\bm{w} \right)\right)^{-N_r}.
\end{align}
\vspace{-0.3cm}
where $r\triangleq N_s\log_2(M)+L\lfloor\log_2(2K+1)\rfloor$, $Q$, $a_i$, $b_i$, and $\bm{A}_{(s,\bm{v}) \to (\hat{s},\hat{\bm{v}})}$ are defined in Appendix~C.
\end{lemma}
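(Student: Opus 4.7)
The plan is to specialize the general ABER derivation of Section~\Rmnum{3}-A to the single-stream case ($N_s=1$) with uncorrelated fading, and then to cleanly isolate how the APEP depends on the precoding vector $\bm{w}$.

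First, I would set $\bm{s}=s$, $\hat{\bm{s}}=\hat{s}$, and $\bm{W}=\bm{w}$ in (\ref{delta}), so that
\[
\bm{\delta} = (s-\hat{s})\bm{H}_d \bm{w} + \bm{H}_r \bm{\Theta}^b (s\bm{\Xi}-\hat{s}\hat{\bm{\Xi}})\bm{G}\bm{w}.
\]
Under the uncorrelated model, the rows of $\bm{H}_d$ and $\bm{H}_r$ are mutually independent zero-mean complex Gaussian vectors with covariances $\beta_d\bm{I}_{N_t}$ and $\beta_r\bm{I}_N$, respectively. Conditioning on $\bm{G}$ and $\bm{w}$, each entry of $\bm{\delta}$ is then a scalar zero-mean complex Gaussian, and using $\|\bm{w}\|^2=1$ together with the unitary property $(\bm{\Theta}^b)^H\bm{\Theta}^b=\bm{I}_N$ its variance evaluates to
\[
\sigma_d^2 \;=\; \beta_d|s-\hat{s}|^2 \;+\; \bm{w}^H \bm{A}_{(s,\bm{v})\to(\hat{s},\hat{\bm{v}})} \bm{w},
\]
with $\bm{A}_{(s,\bm{v})\to(\hat{s},\hat{\bm{v}})} \triangleq \beta_r\,\bm{G}^H(s\bm{\Xi}-\hat{s}\hat{\bm{\Xi}})^H(s\bm{\Xi}-\hat{s}\hat{\bm{\Xi}})\bm{G}$. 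Since the $N_r$ rows of $\bm{\delta}$ are independent, $\lambda=\|\bm{\delta}\|^2$ is a sum of $N_r$ i.i.d.\ exponentials of mean $\sigma_d^2$, so by \emph{Lemma~1} its MGF is $\mathcal{M}_\lambda(t)=(1-\sigma_d^2 t)^{-N_r}$.

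Next, to retain a tunable approximation accuracy I would replace the two-term Q-function surrogate used in (\ref{apep}) by the general exponential-sum form $\mathcal{Q}(x)\approx\sum_{i=1}^Q a_i e^{-b_i x^2}$. Inserting this into (\ref{equa6}) and taking the channel expectation term by term via the MGF yields an APEP of the form $\sum_{i=1}^Q a_i\bigl(1+b_i P\sigma_d^2/(2\sigma^2)\bigr)^{-N_r}$; substituting the closed-form $\sigma_d^2$ from the previous step reproduces exactly the summand in (\ref{eq18}). Finally, plugging this APEP back into the union bound (\ref{eq17}) and observing that, for $N_s=1$, the total symbol-pair count $M^{N_s}(2K+1)^L$ coincides with $S$ while the per-bit normalization is precisely the $r$ defined in the statement, one collapses the two prefactors into $1/(rS)$ and obtains (\ref{eq18}).

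The hard part is Step~2, the row-wise variance computation. It requires decomposing $\bm{\delta}$ into its direct-link and reflected-link contributions, verifying that they are conditionally independent complex Gaussians, and exploiting both $\|\bm{w}\|=1$ and the unitarity of $\bm{\Theta}^b$ so that the base-phase matrix drops out of $\bm{A}_{(s,\bm{v})\to(\hat{s},\hat{\bm{v}})}$ and the quadratic form in $\bm{w}$ emerges cleanly; without this cancellation the dependence on $\bm{w}$ would not be a single positive-semidefinite quadratic and the precoding optimization of Section~\Rmnum{4} could not be cast as a QCQP. Everything else is a routine substitution of the MGF from \emph{Lemma~1} and of the Q-function approximation into the previously established union bound.
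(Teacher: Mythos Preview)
Your proposal is correct and follows essentially the same route as the paper: specialize $\bm{\delta}$ to $N_s=1$, show each entry is zero-mean complex Gaussian with variance $\sigma_d^2=\beta_d|s-\hat{s}|^2+\bm{w}^H\bm{A}_{(s,\bm{v})\to(\hat{s},\hat{\bm{v}})}\bm{w}$, invoke the MGF from \emph{Lemma~1}, apply the $Q$-term exponential bound on the Q-function, and substitute into the union bound (\ref{eq17}). The only cosmetic difference is that you write $\bm{A}_{(s,\bm{v})\to(\hat{s},\hat{\bm{v}})}=\beta_r\bm{G}^H(s\bm{\Xi}-\hat{s}\hat{\bm{\Xi}})^H(s\bm{\Xi}-\hat{s}\hat{\bm{\Xi}})\bm{G}$ in compact matrix form, whereas the paper expands it element-wise as $\beta_r\sum_{l=1}^L|se^{jk_l\Delta\theta}-\hat{s}e^{j\hat{k}_l\Delta\theta}|^2\sum_{n\in\mathcal{A}_l}\bm{g}_n\bm{g}_n^H$; these are identical since $(s\bm{\Xi}-\hat{s}\hat{\bm{\Xi}})$ is diagonal.
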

\vspace{-0.3cm}
\begin{proof}
Please refer to Appendix C. $\hfill\square$
\end{proof}
\vspace{-0.3cm}
\begin{remark}
By increasing $Q$, the upper bounds tend to be tighter. Hence, we can choose $Q$ according to the SNR level to accurately characterize the ABER.
\end{remark}
\vspace{-0.3cm}
\begin{remark}
Note that for ABER, the working area of practical interest is always at relatively high SNRs corresponding to satisfactorily low ABER. In practice, an absolutely low SNR can be regarded as relatively high for a low-order modulation that better fits the poor wireless channel. Hence, it is reasonable to consider the ABER at high SNRs, where $Q=1$ is sufficient to accurately capture the ABER.
\end{remark}
\vspace{-0.3cm}
\begin{corollary}
According to (\ref{capacity}), the ergodic capacity of the proposed SRPM can also be expressed as a function of $\bm{w}$, i.e.,
\vspace{-0.3cm}
\begin{align}\label{eq19}
\mathcal{C}_{\mathrm{SRPM}} \approx 2\log_2 S-\log_2\left( S+\sum_{s} \sum_{\bm{v}} \sum_{\hat{s}} \sum_{\hat{\bm{s}}} \left( 1+\frac{P}{2\sigma^2}\left(\beta_d \vert s-\hat{s}\vert^2 +\bm{w}^H \bm{A}_{(s,\bm{v})\to (\hat{s},\hat{\bm{v}})}\bm{w} \right)\right)^{-N_r}\right).
\end{align}
\end{corollary}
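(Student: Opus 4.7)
The plan is to start from the closed-form ergodic capacity of Theorem 2 and specialize it under the two assumptions of this section: $N_s=1$ and spatially uncorrelated channels. Under these assumptions Lemma 1 has already reduced the MGF of $\lambda$ to $\mathcal{M}_\lambda(t)=(1-\sigma_d^2 t)^{-N_r}$, so evaluating at $t=-P/(2\sigma^2)$ replaces each summand of (\ref{capacity}) by $\bigl(1+P\sigma_d^2/(2\sigma^2)\bigr)^{-N_r}$. The only remaining task is therefore to exhibit the constant $\sigma_d^2$, which Appendix A derives from $\Vert\bm{\delta}\Vert^2$, as an explicit quadratic form in the precoding vector $\bm{w}$ for each pair $((s,\bm{v}),(\hat{s},\hat{\bm{v}}))$.

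To obtain that explicit form I would reuse the decomposition already carried out for Lemma 2 in Appendix C. With $N_s=1$ the precoding matrix collapses to a vector $\bm{w}$ and the error term in (\ref{delta}) splits naturally as $\bm{\delta}=\bm{H}_d\bm{w}(s-\hat{s})+\bm{H}_r\bm{\Theta}^b(\bm{\Xi}s-\hat{\bm{\Xi}}\hat{s})\bm{G}\bm{w}$. The direct-link and RIS-link components involve the statistically independent matrices $\bm{H}_\omega^d$ and $\bm{H}_\omega^r$, so after taking expectations in the i.i.d.\ case the scaling coefficient $\sigma_d^2$ decomposes additively: the direct-link piece is $\beta_d\vert s-\hat{s}\vert^2$, which does not depend on $\bm{w}$, and the RIS-link piece is the quadratic form $\bm{w}^H\bm{A}_{(s,\bm{v})\to(\hat{s},\hat{\bm{v}})}\bm{w}$ built with exactly the matrix $\bm{A}_{(s,\bm{v})\to(\hat{s},\hat{\bm{v}})}$ introduced in Appendix C.

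Substituting this identification into (\ref{capacity}) yields (\ref{eq19}) immediately. The only small piece of bookkeeping I would verify is the range of summation: Theorem 2 sums with $\hat{\bm{s}}\neq\bm{s}$ and $\hat{\bm{v}}\neq\bm{v}$, whereas (\ref{eq19}) writes the sums unrestricted; on the excluded diagonal terms $\sigma_d^2=0$ and the summand equals one, so the extra contribution is a constant that is consistent with the leading $S$ already written outside the logarithm. There is no serious obstacle here: the corollary is essentially a one-line specialization of Theorem 2, and all of the heavy lifting, namely the MGF in Lemma 1 and the $\bm{w}$-parameterization of $\bm{\delta}$ in Appendix C, has already been done for the ABER analysis in Lemma 2.
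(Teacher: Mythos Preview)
Your proposal is correct and matches the paper's implicit approach: the corollary is simply Theorem~2 specialized via Lemma~1's uncorrelated MGF $\mathcal{M}_\lambda(t)=(1-\sigma_d^2 t)^{-N_r}$ together with the identification $\sigma_d^2=\beta_d|s-\hat{s}|^2+\bm{w}^H\bm{A}_{(s,\bm{v})\to(\hat{s},\hat{\bm{v}})}\bm{w}$ established in Appendix~C for Lemma~2. Your observation about the summation range is apt; the mismatch between the restricted sums in (\ref{capacity}) and the unrestricted ones in (\ref{eq19}) is a notational inconsistency in the paper rather than a substantive difference, and your bookkeeping handles it correctly.
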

\vspace{-0.5cm}
From (\ref{eq18}) and (\ref{eq19}), we are able to establish a general optimization framework for ABER minimization and ergodic capacity maximization. The universal precoding optimization problem follows
\vspace{-0.4cm}
\begin{align} \label{problem}
\mathop{{\rm{minimize}}}_{\bm{w}} &\quad \sum_{s} \sum_{\bm{v}} \sum_{\hat{s}} \sum_{\hat{\bm{v}}}\left( 1+\frac{P}{2\sigma^2}\left(\beta_d\vert s-\hat{s} \vert^2+\bm{w}^H\bm{A}_{(s,\bm{v}) \to (\hat{s},\hat{\bm{v}})}\bm{w} \right)\right)^{-N_r}\nonumber \\
\mathrm{subject}\enspace \mathrm{to}&\quad \Vert \bm{w} \Vert^2=1. 
\end{align}
Through some linear transformations, the objective function in (\ref{problem}) is directly extended to (\ref{eq18}) and (\ref{eq19}). Note that the problem in (\ref{problem}) is a nonconvex QCQP problem, which is in general NP-hard and intractable. In order to remove the quadratic term, we introduce that $\bm{W}=\bm{w} \bm{w}^H$ and equivalently rewrite the problem in (\ref{problem}) as
\vspace{-0.3cm}
\begin{align}\label{problem2}
\mathop{{\rm{minimize}}}_{\bm{W}} &\quad  f(\bm{W})\nonumber \\
\mathrm{subject}\enspace \mathrm{to}&\quad \mathrm{Tr}(\bm{W})=1, \enspace \bm{W}\succeq \bm{0}, \enspace \mathrm{rank}(\bm{W})=1,
\end{align}
where $ f(\bm{W})\triangleq \sum_{s} \sum_{\bm{v}} \sum_{\hat{s}} \sum_{\hat{\bm{v}}}\left( 1+\frac{P}{2\sigma^2}\left(\beta_d\vert s-\hat{s} \vert^2+\mathrm{Tr}\left(\bm{A}_{(s,\bm{v} )\to (\hat{s},\hat{\bm{v}})}\bm{W} \right)\right)\right)^{-N_r}$. Note that  the problem in (\ref{problem2}) is still nonconvex due to the rank-one constraint. To circumvent this difficulty, the SDR technique has been widely used to relax the nonconvex rank-one constraint \cite{sdr1}\cite{sdr2}. By exploiting the SDR, we transform the problem in (\ref{problem2}) to the following relaxed problem as
\vspace{-0.3cm}
\begin{align}\label{problem3}
\mathop{{\rm{minimize}}}_{\bm{W}} &\quad f(\bm{W})\nonumber \\
\mathrm{subject}\enspace \mathrm{to}&\quad \mathrm{Tr}(\bm{W})=1, \enspace \bm{W}\succeq \bm{0},
\end{align}
which is convex and thus can be efficiently solved by existing convex program solvers, e.g., CVX tools \cite{cvx}. 
According to \cite{luoz}, the worst-case complexity of solving the above problem via an interior-point algorithm is at the order of $\mathcal{O}\left( \max \left\{S^2, N_t \right \}^4 {N_t}^{0.5}  \log(1/\epsilon) \right)$, where $\epsilon$ is required solution accuracy. Notably, the complexity grows polynomially with respect to the modulation order, $S$, and the number of antennas at the BS, $N_t$, but not the number of RIS elements.

Although the rank-one constraint is relaxed, we prove in the following theorem that the relaxation is fortunately tight for this problem.
\vspace{-0.3cm}
\begin{theorem}
For a feasible problem in (\ref{problem3}), an optimal $\bm{W}$ with rank one is always available.
\end{theorem}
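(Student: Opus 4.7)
The plan is to analyse problem~(\ref{problem3}) through its KKT conditions and to show that any optimal $\bm{W}^\ast$ must be supported on the top eigenspace of a positive semidefinite matrix built from the gradient of $f$, from which a rank-one optimum can be extracted. First I would verify that (\ref{problem3}) is a convex programme: each summand of $f(\bm{W})$ is the composition of the convex decreasing function $(1+c\,\cdot)^{-N_r}$ with a nonnegative affine functional of $\bm{W}$, so every summand is convex in $\bm{W}$ and so is their sum, while the feasible set is a compact convex spectrahedron with nonempty relative interior (e.g., $\bm{W}=\bm{I}_{N_t}/N_t$). Hence Slater's condition holds, strong duality is in force, and the KKT conditions are both necessary and sufficient for optimality.

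Next I would carry out the KKT analysis. Introducing $\mu\in\mathbb{R}$ for the trace constraint and $\bm{Z}\succeq\bm{0}$ for the PSD constraint, the stationarity condition reads $\nabla f(\bm{W}^\ast)+\mu\bm{I}-\bm{Z}=\bm{0}$ together with complementary slackness $\bm{Z}\bm{W}^\ast=\bm{0}$. A direct differentiation gives $\nabla f(\bm{W}^\ast)=-\bm{B}^\ast$ where
\[
\bm{B}^\ast=\sum_{(s,\bm{v})\to(\hat s,\hat{\bm{v}})}\frac{N_r P/(2\sigma^2)}{\bigl(1+(P/(2\sigma^2))(\beta_d|s-\hat s|^2+\mathrm{Tr}(\bm{A}_{(s,\bm{v})\to(\hat s,\hat{\bm{v}})}\bm{W}^\ast))\bigr)^{N_r+1}}\bm{A}_{(s,\bm{v})\to(\hat s,\hat{\bm{v}})}.
\]
Each coefficient in this sum is strictly positive and each $\bm{A}_{(s,\bm{v})\to(\hat s,\hat{\bm{v}})}$ is positive semidefinite (it factors as $\beta_r\bm{G}^H(s\bm{\Xi}-\hat s\hat{\bm{\Xi}})^H(s\bm{\Xi}-\hat s\hat{\bm{\Xi}})\bm{G}$), so $\bm{B}^\ast\succeq\bm{0}$. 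Substituting yields $\bm{Z}=\mu\bm{I}-\bm{B}^\ast$, and $\bm{Z}\succeq\bm{0}$ forces $\mu\geq\lambda_{\max}(\bm{B}^\ast)$; meanwhile the trace constraint forces $\bm{W}^\ast\neq\bm{0}$, so complementary slackness forces $\bm{Z}$ to have a nontrivial kernel, pinning $\mu=\lambda_{\max}(\bm{B}^\ast)$ and placing $\mathrm{range}(\bm{W}^\ast)$ inside the top eigenspace $\mathcal{E}$ of $\bm{B}^\ast$.

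In the generic case $\dim\mathcal{E}=1$ the conclusion is immediate: $\bm{W}^\ast$ must be a positive multiple of the outer product of the (unique up to phase) unit top eigenvector, and combined with $\mathrm{Tr}(\bm{W}^\ast)=1$ this yields $\bm{W}^\ast=\bm{u}^\ast\bm{u}^{\ast H}$ of rank one. The hard part will be the degenerate subcase $\dim\mathcal{E}>1$, in which KKT alone does not preclude higher-rank optima and naively replacing $\bm{W}^\ast$ by a single eigenvector alters the self-consistent coefficients defining $\bm{B}^\ast$. I would handle this via a perturbation/continuity argument: slightly perturb the problem data so that $\lambda_{\max}$ becomes simple (a generic property), apply the clean case to obtain a rank-one optimum $\bm{W}^\ast_\varepsilon$, and pass to the limit as the perturbation vanishes; by continuity of $f$ and compactness of the feasible set, any accumulation point is both rank one and optimal for the original problem.
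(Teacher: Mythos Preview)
Your KKT route is exactly the paper's: form the Lagrangian, obtain the stationarity condition $\bm{Z}=\mu\bm{I}-\bm{B}^\ast$ with $\bm{B}^\ast\succeq\bm{0}$ (the paper's $\bm{\Gamma}$), pin $\mu=\lambda_{\max}(\bm{B}^\ast)$ by combining $\bm{Z}\succeq\bm{0}$ with $\bm{Z}\bm{W}^\ast=\bm{0}$ and $\mathrm{Tr}(\bm{W}^\ast)=1$, and conclude that $\mathrm{range}(\bm{W}^\ast)\subseteq\mathcal{E}$. You are in fact more careful than the paper, which at this point simply declares that $\bm{\xi}_{\max}\bm{\xi}_{\max}^H$ ``satisfies the KKT conditions'' without addressing the self-consistency issue you correctly flag: the gradient matrix is evaluated at the optimum and would in general change if one substituted a different candidate.

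That said, your perturbation remedy for $\dim\mathcal{E}>1$ inherits exactly the circularity you diagnosed in the na\"ive replacement. The matrix whose top eigenvalue you want to make simple is $\bm{B}^\ast(\bm{W}^\ast_\varepsilon)$, which depends on the perturbed optimum; asserting that a generic data perturbation renders this \emph{self-consistent} eigenvalue simple is not obvious and would need a transversality argument you have not supplied. A cleaner fix dispenses with perturbation entirely: spectrally decompose any optimal $\bm{W}^\ast=\sum_{i=1}^{r}\lambda_i\,\bm{u}_i\bm{u}_i^H$ with orthonormal $\bm{u}_i$ and $\lambda_i>0$; since $\sum_i\lambda_i=\mathrm{Tr}(\bm{W}^\ast)=1$, this writes $\bm{W}^\ast$ as a convex combination of the feasible rank-one points $\bm{u}_i\bm{u}_i^H$. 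Convexity of $f$ gives $f(\bm{W}^\ast)\le\sum_i\lambda_i f(\bm{u}_i\bm{u}_i^H)$, while optimality gives $f(\bm{u}_i\bm{u}_i^H)\ge f(\bm{W}^\ast)$ for every $i$, forcing equality termwise; hence each $\bm{u}_i\bm{u}_i^H$ is itself optimal. Equivalently, the optimal set is a face of the spectrahedron $\{\bm{W}\succeq\bm{0},\,\mathrm{Tr}(\bm{W})=1\}$ and therefore contains an extreme point, and the extreme points of this set are precisely the rank-one matrices.
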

\vspace{-0.3cm}
\begin{proof}
Please refer to Appendix D. \hfill $\square$
\end{proof}
\vspace{-0.3cm}

Based on \emph{Theorem 3}, we verify that the optimal $\bm{w}^{\mathrm{opt}}$ can be obtained via the eigenvalue decomposition (EVD) of $\bm{W}^{\mathrm{opt}}$. Note that the precoding optimization is based on the CSI of the channel between BS and RIS rather than the full CSI of all channels. The channel between BS and RIS is assumed as long-term static, which can be estimated with high accuracy at low cost. 
Specifically, the user can first transmit pilots to the BS with different phase shift matrices at the RIS and then channel measurements are received by the BS to perform channel estimation. Channel estimation algorithms, like the ones proposed in \cite{cest2,cest3}, are available for estimating the channel between the BS and RIS.
Hence, the proposed precoding optimization is friendly and applicable especially for resource-limited systems in practice.

We then consider a special case where a closed-form optimal solution of the precoding vector is admitted. Assume that there exists a major line-of-sight (LoS) path between BS and RIS, that is, $\bm{G}$ can be well represented by a matrix with rank one, which is further expressed as
$\bm{G}=\beta \textbf{a}_N(\phi_r) \textbf{a}_{N_t}^H(\phi_t)$,
where $\beta$ is the complex gain of the path, $\textbf{a}_N(\phi_r)$ and $\textbf{a}_{N_t}(\phi_t)$ respectively represent the receive and transmit steering vectors, and $\phi_r$ and $\phi_t$ denote the angle of arrival (AoA) at RIS and the angle of departure (AoD) at BS, respectively.
The variable $\beta$ is complex-valued and follows the distribution $\mathcal{CN}(0,10^{-0.1\kappa})$ \cite{wangp} where $\kappa$ represents the path loss coefficient. The steering vector is defined as
\vspace{-0.3cm}
\begin{align}\label{eq2}
\textbf{a}_N(\vartheta)&=\left [1,e^{j\frac{2\pi d}{\lambda}\sin\vartheta },\cdots,e^{j\frac{2\pi d}{\lambda}(N-1)\sin \vartheta }\right]^T, 
\end{align}
where $d$ is the antenna separation space distance, and $\lambda$ is the wavelength of the carrier. By defining $\bm{a}\triangleq \textbf{a}_N(\phi_r)$ and $\bm{b}\triangleq \textbf{a}_{N_t}(\phi_r)$, we have the following lemma.
\vspace{-0.4cm}
\begin{lemma}
For a keyhole channel $\bm{G}=\beta \bm{a} \bm{b}^H$, the optimal precoding vector, $\bm{w}^{\mathrm{opt}}$, for the proposed SRPM scheme is given by
$\bm{w}^{\mathrm{opt}}=\frac{\bm{b}}{\Vert \bm{b}\Vert}$.
\end{lemma}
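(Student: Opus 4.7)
The plan is to exploit the rank-one structure of the keyhole channel to show that every quadratic form appearing in the objective of problem (\ref{problem}) is simultaneously maximized by the same unit-norm vector, after which optimality follows without any multiobjective trade-off. The first step is to observe that for $\bm{G}=\beta \bm{a}\bm{b}^H$, the action of $\bm{G}$ on any precoder collapses to a scalar multiple of $\bm{a}$, namely $\bm{G}\bm{w}=\beta(\bm{b}^H\bm{w})\bm{a}$. Substituting this into the definition of $\bm{\delta}$ in (\ref{delta}) gives
\begin{align}
\bm{\delta}=(s-\hat{s})\bm{H}_d\bm{w}+\beta(\bm{b}^H\bm{w})\bm{H}_r\bm{\Theta}^b\left(s\bm{\Xi}-\hat{s}\hat{\bm{\Xi}}\right)\bm{a}.
\end{align}
Only the second summand depends on the precoder beyond the scalar factor $s-\hat{s}$, and it does so purely through the scalar $\bm{b}^H\bm{w}$.

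Next I would revisit the construction of $\bm{A}_{(s,\bm{v})\to(\hat{s},\hat{\bm{v}})}$ from Appendix C, which captures the $\bm{w}$-dependent, $\bm{G}$-coupled portion of $\mathbb{E}\{\Vert\bm{\delta}\Vert^2\}$ after averaging over $\bm{H}_d$ and $\bm{H}_r$. Using the isotropy of the uncorrelated Rayleigh factors, the cross term between $\bm{H}_d\bm{w}$ and the $\bm{H}_r\bm{\Theta}^b(\cdot)\bm{a}$ component vanishes in expectation, and the $\beta_d\vert s-\hat{s}\vert^2$ piece is already separated out in the statement of \emph{Lemma 3}. What remains is a term proportional to $\vert\bm{b}^H\bm{w}\vert^2$, so for the keyhole channel one obtains
\begin{align}
\bm{A}_{(s,\bm{v})\to(\hat{s},\hat{\bm{v}})}=c_{(s,\bm{v}),(\hat{s},\hat{\bm{v}})}\,\bm{b}\bm{b}^H,
\end{align}
for some nonnegative scalar $c_{(s,\bm{v}),(\hat{s},\hat{\bm{v}})}\ge 0$ that depends on $\vert\beta\vert^2$, $\beta_r$, the base phases, and the symbol pair but not on $\bm{w}$. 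Verifying this rank-one form and the sign of the coefficient is the main obstacle of the proof, because it requires plugging the keyhole $\bm{G}$ into whatever expectation defines $\bm{A}_{(s,\bm{v})\to(\hat{s},\hat{\bm{v}})}$ in Appendix C and confirming that no other outer-product directions survive.

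Once the rank-one form is established, the optimization becomes transparent. Every summand in the objective of (\ref{problem}) is a strictly decreasing function of $\bm{w}^H\bm{A}_{(s,\bm{v})\to(\hat{s},\hat{\bm{v}})}\bm{w}=c_{(s,\bm{v}),(\hat{s},\hat{\bm{v}})}\vert\bm{b}^H\bm{w}\vert^2$, and the Cauchy--Schwarz inequality gives $\vert\bm{b}^H\bm{w}\vert^2\le\Vert\bm{b}\Vert^2\Vert\bm{w}\Vert^2=\Vert\bm{b}\Vert^2$ under the constraint $\Vert\bm{w}\Vert^2=1$, with equality if and only if $\bm{w}$ is a positive scalar multiple of $\bm{b}$. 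Since this single choice $\bm{w}=\bm{b}/\Vert\bm{b}\Vert$ simultaneously maximizes every quadratic form in the finite sum, it minimizes the objective term by term and is therefore globally optimal for problem (\ref{problem}), concluding the proof. The same argument transports immediately to the capacity formulation in (\ref{eq19}) because its $\bm{w}$-dependence enters through exactly the same quadratic forms.
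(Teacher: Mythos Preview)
Your proposal is correct and follows essentially the same route as the paper's proof: both observe that the keyhole structure $\bm{G}=\beta\bm{a}\bm{b}^H$ forces the $\bm{w}$-dependence of every pairwise term to collapse to $\vert\bm{b}^H\bm{w}\vert^2$, so that all summands are simultaneously optimized by $\bm{w}=\bm{b}/\Vert\bm{b}\Vert$ via Cauchy--Schwarz. The only cosmetic difference is that the paper computes $\sigma_d^2$ directly, whereas you phrase the same observation as $\bm{A}_{(s,\bm{v})\to(\hat{s},\hat{\bm{v}})}=c_{(s,\bm{v}),(\hat{s},\hat{\bm{v}})}\bm{b}\bm{b}^H$; plugging $\bm{g}_n=\beta^* a_n^*\bm{b}$ into the Appendix~C definition of $\bm{A}$ immediately verifies this rank-one form, so the ``main obstacle'' you flag is straightforward.
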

\vspace{-0.4cm}
\begin{proof}
Please refer to Appendix E. $\hfill\square$
\end{proof}
\vspace{-0.4cm}
\begin{remark}
It is found that the optimal beam direction is consistent with the transmit steering vector at BS, rather than the equivalent channel between the BS and the user in existing PBF schemes like \cite{wu}. This is because aligning the beam to a fixed user direction for a long period of time can cause performance degradation due to the dynamic user-related channels.
\end{remark}

\vspace{-0.5cm}
\section{Low-complexity Sphere-Decoding Based Layered Detector Design}
\vspace{-0.2cm}
Due to the extremely high complexity of ML detection, it is necessary to develop a low-complexity detector to promote the development of the proposed SRPM scheme in practice. Sphere decoding is a promising MIMO detection strategy with near-ML performance and low computational complexity \cite{sd1,sd2,sd3}. In this section, we design an SD-based layered detector to achieve efficient and low computational complexity detection for SRPM.

To pave the way for the development of a low-complexity detector, we first rewrite the received signal in (\ref{eq5}) to a conciser form. By defining $\bm{A}\triangleq \bm{H}_d \bm{W}$ and $\bm{B}\triangleq \bm{G} \bm{W}$, we have 
\vspace{-0.3cm}
\begin{align}
\bm{y}=\sqrt{P}\left( \bm{A} \bm{s}+ \bm{H}_r \bm{\Theta}^b \bm{\Xi}\bm{B}\bm{s}\right ) +\bm{z},
\end{align}
where $\bm{A}$, $\bm{H}_r\bm{\Theta}^b$, and $\bm{B}$ are equivalent channels, and $\bm{\Xi}$ and $\bm{s}$ are the symbols to be detected. Specifically, the received signal at the $i$th antenna is
\vspace{-0.3cm}
\begin{align}
y_i&=\sqrt{P} \sum_{m=1}^{N_s} a_{i,m} s_m +  \sqrt{P}\sum_{l=1}^L e^{j k_l \Delta \theta} \sum_{n\in \mathcal{A}_l} \sum_{m=1}^{N_s} h_{r,i,n}e^{j\theta_n^b} b_{n,m} s_m+z_i\nonumber \\
&=\sqrt{P} \sum_{m=1}^{N_s} a_{i,m} s_m +  \sqrt{P} \sum_{m=1}^{N_s} \sum_{l=1}^L c_{i,l,m} e^{j k_l \Delta \theta} s_m+z_i
\end{align}
where $a_{i,m}$ is the $(i,m)$th element of $\bm{A}$, $h_{r,i,n}$ is the $(i,n)$th element of $\bm{H}_r$, $b_{n,m}$ is the $(n,m)$th element of $\bm{B}$, and $c_{i,l,m}\triangleq \sum_{n\in \mathcal{A}_l} h_{r,i,n}e^{j\theta_n^b} b_{n,m}$. Defining $\bm{h}_{i,l}\triangleq [c_{i,l,1},\cdots,c_{i,l,N_s}]^H$, $l=1,\cdots,L$, $\bm{h}_{i,L+1}\triangleq [a_{i,1},\cdots,a_{i,N_s}]^H$,  $\bm{h}_{i}=[\bm{h}_{i,1}^H,\cdots,\bm{h}_{i,L+1}^H ]^H$, and $\bar{\bm{v}}=[e^{j k_1\Delta \theta},\cdots,e^{j k_L\Delta \theta},1]^T$, we have
\vspace{-0.3cm}
\begin{align}
y_i=\sqrt{P}\sum_{l=1}^{L+1} e^{j k_l \Delta \theta} \bm{h}_{i,l}^H \bm{s} +z_i=\sqrt{P}\bm{h}_i^H \left( \bar{\bm{v}} \otimes \bm{s} \right)+z_i.
\end{align}
Hence, the received signal vector, $\bm{y}$, is equivalently reformulated as
\vspace{-0.3cm}
\begin{align}\label{eq9}
\bm{y}=\sqrt{P} \left[\bm{h}_1,\cdots,\bm{h}_{N_r}\right]^H\left( \bar{\bm{v}} \otimes \bm{s} \right)+\bm{z}=\sqrt{P}\bm{H}\bm{x}+\bm{z},
\end{align}
where $\bm{H}\triangleq \left[\bm{h}_1,\cdots,\bm{h}_{N_r}\right]^H$ is the equivalent channel and $\bm{x} \triangleq \bar{\bm{v}} \otimes \bm{s}$ is the equivalent symbol. 

Note that some linear decoders, such as zero-forcing (ZF) and minimum mean squared error (MMSE) detectors, can be used to detect the equivalent symbol vector $\bm{x}$. Then, we can reconstruct $\bm{s}$ and $\bm{v}$ based on the detected $\bm{x}$. However, these linear detection methods do not take the unique structure of $\bm{x}$ into consideration, which usually lead to significant performance loss. Hence, we resort to the SD technique to develop an effective detection algorithm.

Following the principle of SD, instead of searching all possible symbols, we reduce the computational complexity through  restricting the search within a hypersphere, i.e. satisfying
\begin{align}\label{eq27}
\left \Vert \bm{y}-\sqrt{P} \bm{H}\bm{x}\right \Vert^2 \leq D^2,
\end{align}
where $D$ is a predetermined search radius. To facilitate the implementation of SD,  assume that $N_r>(L+1)N_s$, i.e., sufficient receive antennas are deployed. Then, we can rewrite the equivalent channel $\bm{H}$ in (\ref{eq9}) by applying the QR factorization as
\begin{align}
\bm{H}=\left [\bm{Q}_1,\bm{Q}_2 \right ]\left[\bm{R},\bm{0}\right]^T,
\end{align}
where both $\bm{Q}_1\in \mathbb{C}^{N_r\times (L+1)N_s}$ and $\bm{Q}_2\in \mathbb{C}^{N_r\times N_r- (L+1)N_s}$ are column-orthogonal matrices, and $\bm{R}\in \mathbb{C}^{(L+1)N_s\times (L+1)N_s}$ is an upper triangular matrix. Next, we rewrite (\ref{eq27}) as
\begin{align}\label{eq30}
\left \Vert \bar{\bm{y}}-\sqrt{P} \bm{R}\bm{x}\right \Vert^2=\sum_{i=1}^{(L+1)N_s}\left \vert \bar{y}_i-\sum_{j=i}^{(L+1)N_s}R_{i,j} x_j  \right \vert\leq \bar{D}^2,
\end{align}
where $\bar{\bm{y}}\triangleq \bm{Q}_1^H  \bm{y}$, $\bar{D}^2\triangleq D^2-\left \Vert \bm{Q}_2^H \bm{y}\right \Vert^2$, $R_{i,j}$ is the $(i,j)$th element of $\bm{R}$, and $x_j$ is the $j$th element of $\bm{x}$.

Note that the equivalent symbol vector $\bm{x}$ of $(L+1) N_s$-dimension only contains $L$ BS-transmitted symbols and $N_s$ phase-modulated symbols,  and the traditional SD procedure cannot be directly applied to the SRPM detection. Therefore, we divide the  SD procedure into two layers to detect the BS-transmitted symbols and phase-modulated symbols respectively. In the first layer, given that $\bm{x}_{[LN_s+1:(L+1)N_s]}=\bm{s}$, we first determine the BS-transmitted symbols. According to the philosophy of SD, the search starts from $i=(L+1)N_s$ and the feasible region of $x_{i}$ is reduced by a necessary condition of (\ref{eq30}), i.e.,
\begin{align}
\left \vert \bar{y}_{(L+1)N_s}-R_{(L+1)N_s,(L+1)N_s}x_{(L+1)N_s}\right \vert \leq \bar{D}^2. 
\end{align}
Once we select a feasible candidate of $x_{i}$, we set $i=i-1$ and proceed with the search. At this time, the sphere constraint of $x_i$ is expressed as
\begin{align}\label{sd1}
\left \vert \bar{y}_{i}-\sum_{j=i}^{(L+1)N_s}R_{i,j} x_j  \right \vert\leq {D}_i^2,
\end{align}
where $D_i^2=D_{i+1}^2-\left \vert \bar{y}_{i+1}-\sum_{j=i+1}^{(L+1)N_s}R_{i+1,j} x_j  \right \vert$ and $D_{(L+1)N_s}^2=\bar{D}^2$. This process is terminated when $i=LN_s$ and $\bm{s}$ is detected.

For the second layer, i.e., $i\leq L N_s$, since $\bm{s}$ is determined, every $N_s$ elements of $\bm{x}$ contain the same phase-modulated symbol. Hence, we start from $i^{\prime}=L$ and the sphere constraints for $i^{\prime}$ is formulated as
\begin{align}\label{sd2}
\sum_{i=N_s i^{\prime}-N_s+1}^{N_s i^{\prime}}\left \vert \bar{y}_{i}-\sum_{j=i}^{(L+1)N_s}R_{i,j} x_j  \right \vert \leq \tilde{D}_{i^{\prime}}^2,
\end{align}
where $\tilde{D}_{i^{\prime}}^2=\tilde{D}_{i^{\prime}+1}^2-\sum_{i=N_s( i^{\prime}-1)+1}^{N_s i^{\prime}}\left \vert \bar{y}_{i+1}-\sum_{j=i+1}^{(L+1)N_s}R_{i+1,j} x_j  \right \vert$ and $\tilde{D}_{L}^2=D_{LN_s}^2$. Repeating the above process down to $i^{\prime}=0$, we can find a pair of symbols $(\bm{s},\bm{v})$ that satisfy (\ref{eq30}). More intuitively, we represent the above search domain as a tree, as depicted in Fig. \ref{p2}.

\begin{figure}[!t]
\centering
\includegraphics[width=3.2in]{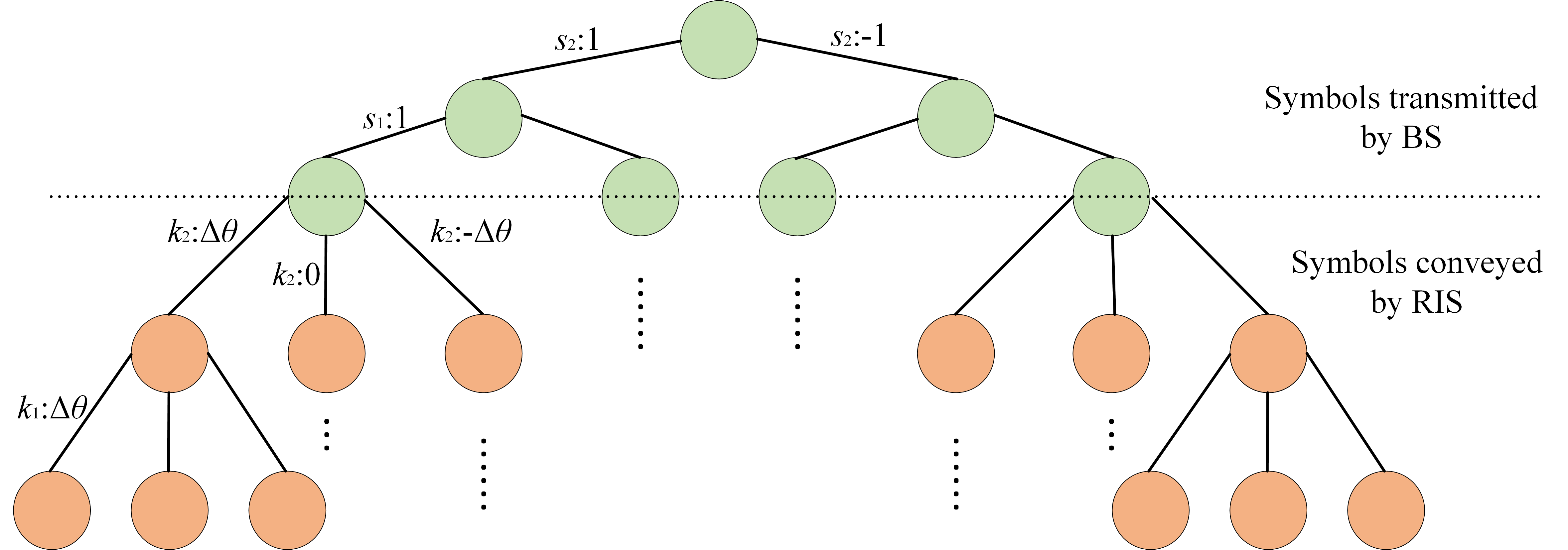}
\caption{An example of the search tree of the proposed SD-based detector for SRPM with $N_s=2$, $M=2$, $L=2$, and $K=1$.}
\label{p2}
\end{figure}

To summarize, we describe the proposed SD-based layered detection algorithm for SRPM in Algorithm~1. According to \cite{sd3}, the expected complexity of the SD-based layered detector is given by $\mathcal{O}\left (M^{\xi N_s}(2K+1)^{\xi L}\right )$, where $\xi \in (0,1]$ is a small factor depending on the SNR levels and satisfies $\xi\ll 1$ at high SNR regimes. Moreover specifically, when SNR increases by 6 dB, the value of $\xi$ becomes half of the original\cite{sd3}.

\begin{spacing}{1.1}
\begin{algorithm}\small
    \caption{SD-based layered detection algorithm for the proposed SRPM scheme}\label{algo2}
    \begin{algorithmic}[1]
		\STATE \textbf{Initialize} $k=L+N_s$, $D_{\mathrm{min}}^2=+\infty$, calculate $\bm{R}$, $\bar{\bm{y}}$, and $\bar{D}^2$ in (\ref{eq30}).
		\WHILE{$k\leq L+N_s$}
		\IF{$k>L$}
			\IF {all possible symbols of $s_{L(N_s-1)+k}$ are traversed} 
			\STATE Set $k=k+1$.
			\ELSE
			\STATE Randomly choose an unselected BS-transmitted symbol as $s_{L(N_s-1)+k}$ and set $i=L(N_s-1)+k$.
			\IF {(\ref{sd1}) is valid}
				\STATE Set $k=k-1$ and update $D_{L(N_s-1)+k-1}^2$.
			\ENDIF
			\ENDIF

		\ELSE
			\IF {all possible symbols of $v_k$ are traversed}
				\STATE Set $k=k+1$.
			\ELSE
			\STATE Randomly choose an unselected RIS modulation symbol as $v_k$ and set $i^{\prime}=k$.
			\IF {(\ref{sd2}) is valid}
				\STATE Set $k=k-1$ and update $\tilde{D}_{k-1}^2$.
			\ENDIF
			\ENDIF
		\ENDIF 
  
		\IF{$k=0$ and $\tilde{D}_0^2<D_{\mathrm{min}}^2$}
			\STATE Set $D_{\mathrm{min}}^2=\tilde{D}_0^2$ and save $(\bm{s},\bm{v})$. 
		\ENDIF 

		\ENDWHILE
		\STATE \textbf{Output} $(\bm{s},\bm{v})$.
    \end{algorithmic}
\end{algorithm}
\end{spacing}

\section{Simulation Results}

In this section, we provide simulation results to verify the effectiveness of the proposed SRPM scheme. For a general case, we model the static channel between the BS and RIS, $\bm{G}$, as
\begin{align}
\bm{G}=\sqrt{\frac{NN_t}{N_p}}\left(\sum_{i=1}^{N_p} \beta_i \textbf{a}_N(\phi_{r,i}) \textbf{a}_{N_t}^H(\phi_{t,i})\right)
\end{align}
where $\beta_i$ is the complex gain of the $i$th path, $\phi_{r,i}$ and $\phi_{t,i}$ respectively represent the AoA of the $i$th path at the RIS and the AoD of the $i$th path at the BS, $N_p$ is the number of paths. Without loss of generality, the large-scale path losses are normalized, which does not affect the analysis of the results \cite{rsm}. Unless otherwise specified, the parameters are set as: the number of reflecting elements, $N=128$, the number of transmit antennas at the BS, $N_t=8$, the number of receive antennas, $N_r=4$, the number of sub-surfaces, $L=2$, the number of data streams, $N_s=1$, the modulation order of the message transmitted by the BS, $M=2$, and the antenna separation space is half of the wavelength, i.e., $\frac{d}{\lambda}=\frac{1}{2}$. Moreover, rectangular QAM is used. For comparison, we mainly considered the following benchmarks.
\begin{itemize}
\item RIS-PBF \cite{wu}: RIS is only used for passive beamforming to enhance the signal strength at the receiver and does not convey any extra message.
\item PBIT \cite{pbit}: For each sub-surface, it is randomly turned on and off with equal probability to convey one extra bit.
\item RIS-RPM \cite{rpm}: $p$ sub-surfaces among the $L$ sub-surfaces are turned off, where $p<L$ is a given number.
\item RIS-QRM \cite{qrm}: The phase offsets of the $p$ sub-surfaces are set as $\frac{\pi}{2}$, while the phase offsets of the other $(L-p)$ sub-surfaces are set as $0$.
\item The proposed SRPM: For each sub-surface, the same tunable phase offset is superimposed on the base phases for extra information transfer.
\end{itemize}

\subsection{Comparison Between the Numerical and Analytical Results}
\begin{figure}[!t]
\centering
	\begin{minipage}{0.4\linewidth}
		\centering
		\includegraphics[width=1\linewidth]{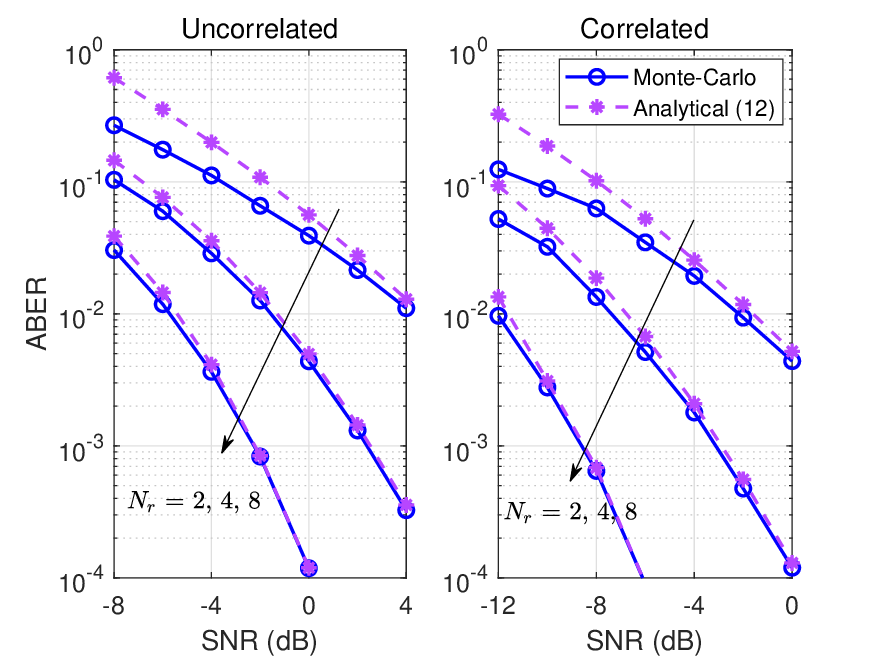}
	\end{minipage}
	\begin{minipage}{0.4\linewidth}
		\centering
		\includegraphics[width=1\linewidth]{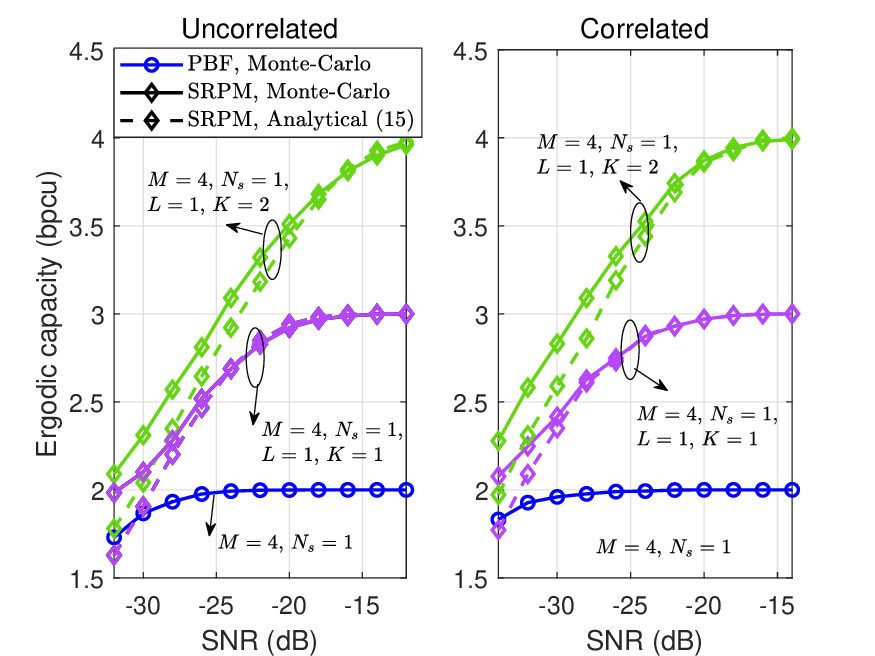}
	\end{minipage}
\caption{Numerical and analytical performance of the SRPM. (a) ABER. (b) Ergodic capacity.}
\label{fig3}
\vspace{-0.7cm}
\end{figure}
In Fig. \ref{fig3}, we compare the  numerical ABER and ergodic capacity obtained from Monte Carlo simulations and the analytical ABER derived in (\ref{eq17}) and (\ref{capacity}), respectively. It is observed that for all tested setups the analytical ABER and ergodic capacity asymptotically tight with growing SNR. Moreover, we observe an increase in diversity order if the number of receive antennas, $N_r$, increases, which also confirms \emph{Theorem 1}. The more receive antennas are deployed, the better ABER performance it achieves. As shown in Fig. \ref{fig3}(b), it is found that the proposed SRPM exhibits much higher channel capacity at high SNRs as expected. At low SNRs, the proposed SRPM also outperforms the traditional PBF scheme despite the performance gain becomes less significant. Both theoretical and simulation results validate the noticeable advantages of the proposed SRPM from the perspective of channel capacity.

\subsection{Impact of SRPM Parameters}
\begin{figure}[!t]
\centering
	\begin{minipage}{0.4\linewidth}
		\centering 
		\includegraphics[width=1\linewidth]{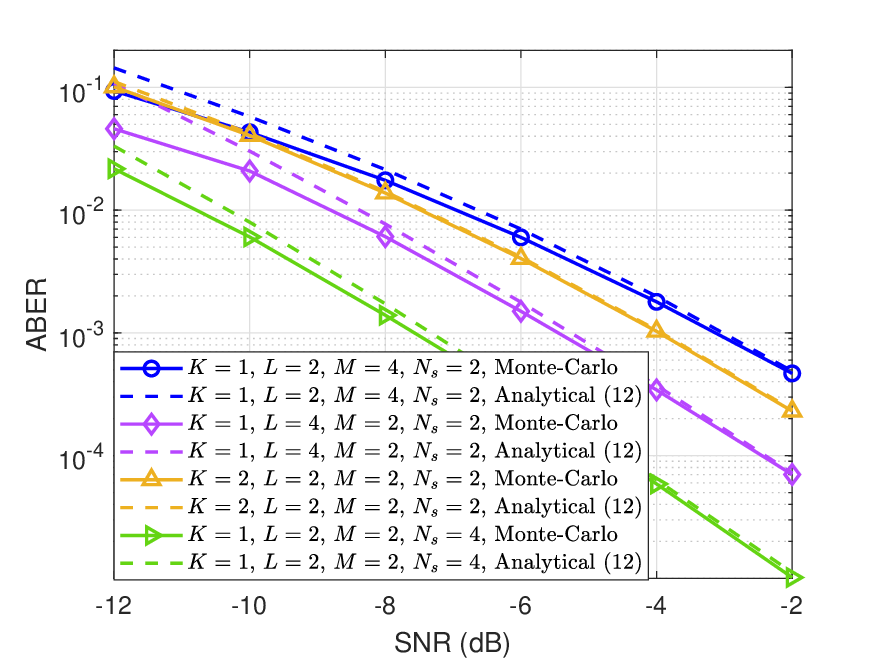}
	\end{minipage}
	\begin{minipage}{0.4\linewidth}
		\centering
		\includegraphics[width=1\linewidth]{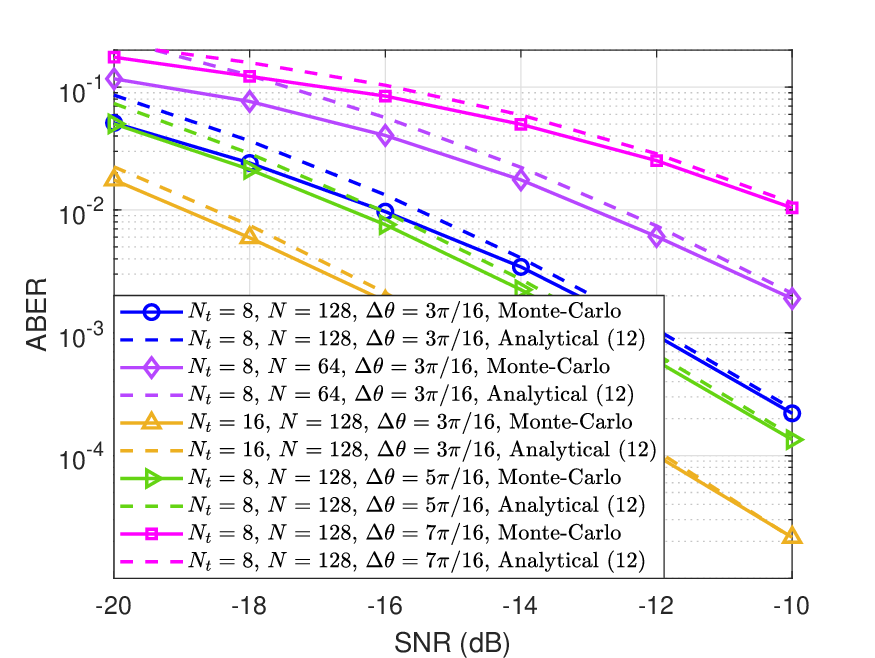}
	\end{minipage}
\caption{(a) ABER of the different modulation parameters. (b) ABER of the different system parameters.}
\vspace{-0.75cm}
\end{figure}
We evaluate the impact of various parameters, including the modulation parameters $(K,L,M,$ $N_s)$ and system parameters $(N_t, N,\Delta \theta)$. In Fig. 4(a), the ABER of different modulation parameter settings are compared for the same spectral efficiency of 4 bpcu. It is seen that the SRPM with parameters $K=1$, $L=2$, $M=2$, and $N_s=4$ outperforms that with all the other parameter settings. By contrast, the SRPM with parameters $K=1$, $L=2$, $M=4$, and $N_s=2$ performs the worst, which utilizes a higher-order modulation at the BS. Since lower modulation order correspond to a larger distance between adjacent constellation symbols, a lower detection error rate is achieved by increasing the number of symbols with lower modulation order. Similarly, we also observe that the SRPM with lager $L$ and smaller $K$ is superior to that with larger $K$ and smaller $L$. This is because increasing the modulation order, $K$, leads to closer constellation points, which come with more chances of error. Hence, it is preferred to increase the number of data streams or sub-surfaces rather than modulation order to pursue higher communication rate.

In Fig. 4(b), we evaluate the impact of the system parameters, which do not affect the communication rate, on the ABER. The numerical results verify that a 3 dB gain in terms of ABER is obtained by doubling either the number of transmit antennas, $N_r$, or the number of  reflecting elements, $N$. This is because the reflecting elements at the RIS play a similar role as the transmit antennas at BS from the perspective of achieving higher diversity gain. It suggests that we can achieve lower ABER with low energy consumption by deploying more low-cost reflecting elements at the RIS. With the growth of the unit step of the phase offsets, $\Delta \theta$, we observe that the ABER decrease first and then increase. This is because tiny phase offset results in small distances between the new and original constellation symbols. On the other hand, excessive phase rotation causes the new and adjacent constellation symbols to be too close. Apparently, there should exist an optimal phase offset for given $K$ and $M$, which can be obtained by exhaustive search for these discrete phase shifts.

\vspace{-0.4cm}
\subsection{Comparison Between Different Modulation Schemes}
\begin{figure}[!t]
\centering
	\begin{minipage}{0.4\linewidth}
		\centering
		\includegraphics[width=1\linewidth]{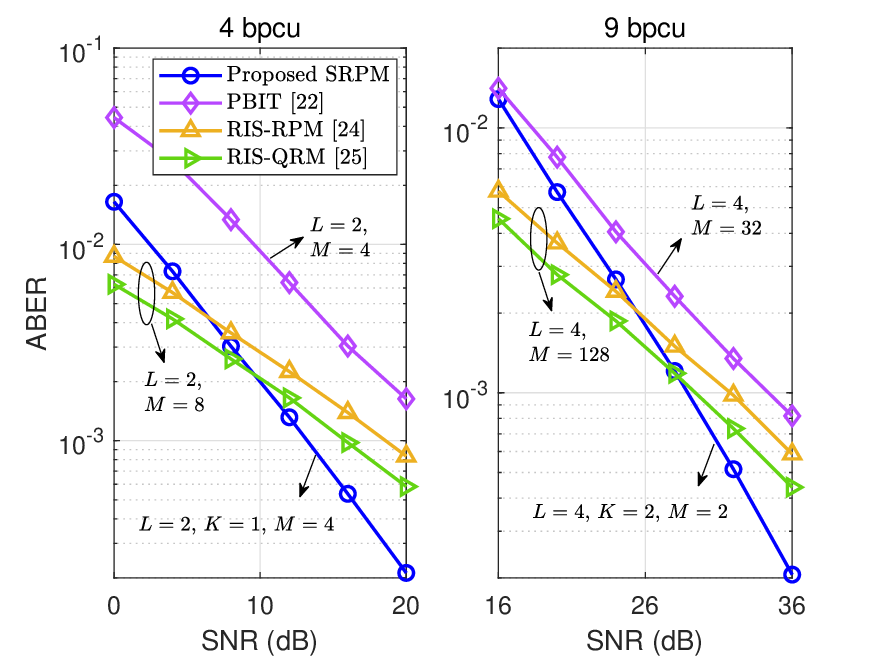}
	\caption{ABER of different schemes with single-antenna receiver.}
	\end{minipage}
	\begin{minipage}{0.4\linewidth}
		\centering
		\includegraphics[width=1\linewidth]{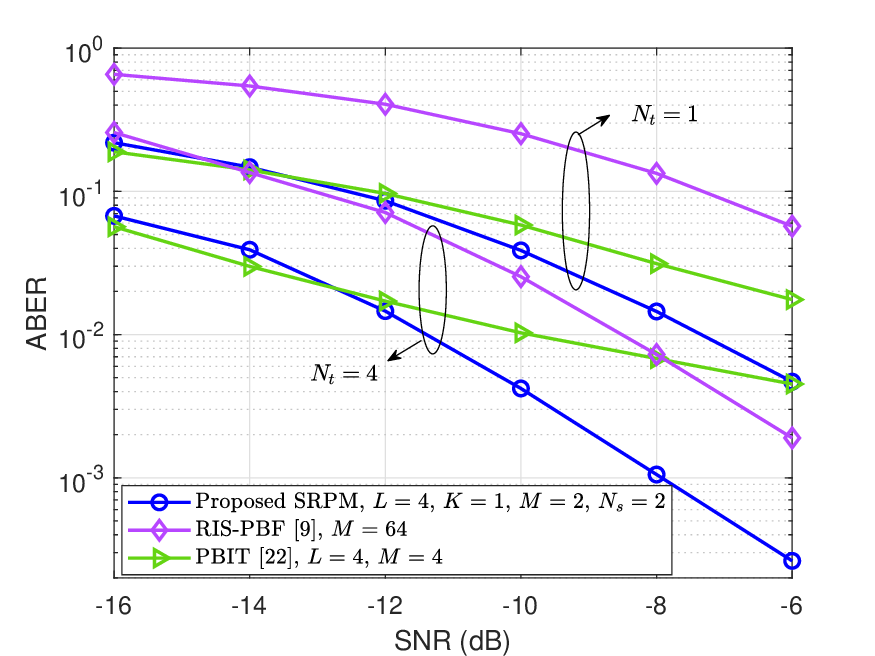}
\caption{ABER of different schemes with multi-antenna receiver.}
	\end{minipage}
\vspace{-0.75cm}
\end{figure}
Then, we consider the performance comparison between the SRPM scheme and state-of-the-art schemes. Firstly, we consider the case where a single antenna is equipped at the receiver, i.e., $N_r=1$, in Fig. 5. For the baselines, the high-order modulation at RIS is not supported and hence the value of $K$ is equal to one, which is omitted for simplicity. The proposed scheme~outperforms all the benchmarks as the SNR increases although the RIS-RPM and RIS-QRM schemes exhibit some performance advantages at the low SNRs. The proposed SRPM scheme is validated with a higher diversity order and the performance gap gradually increases with the SNR.

%
Fig. 6 depicts the performance comparison under the multi-antenna receiver case. For PBIT scheme, we see a high outage probability since a part of the reflecting elements are turned off. Compared with the RIS-PBF scheme, the proposed SRPM achieves a gain of 5 dB and 3 dB in terms of SNR respectively for the case that $N_t=1$ and $N_t=4$.  This phenomenon also confirms that PBF is a sub-optimal way of using RIS. Especially when there are few RF chains at the transmitter, the SRPM brings higher multiplexing gain by modulating information at RIS.

\vspace{-0.3cm}
\subsection{Comparison Between Different Precodings}

\begin{figure}[!t]
\centering
	\begin{minipage}{0.4\linewidth}
		\centering
		\includegraphics[width=1\linewidth]{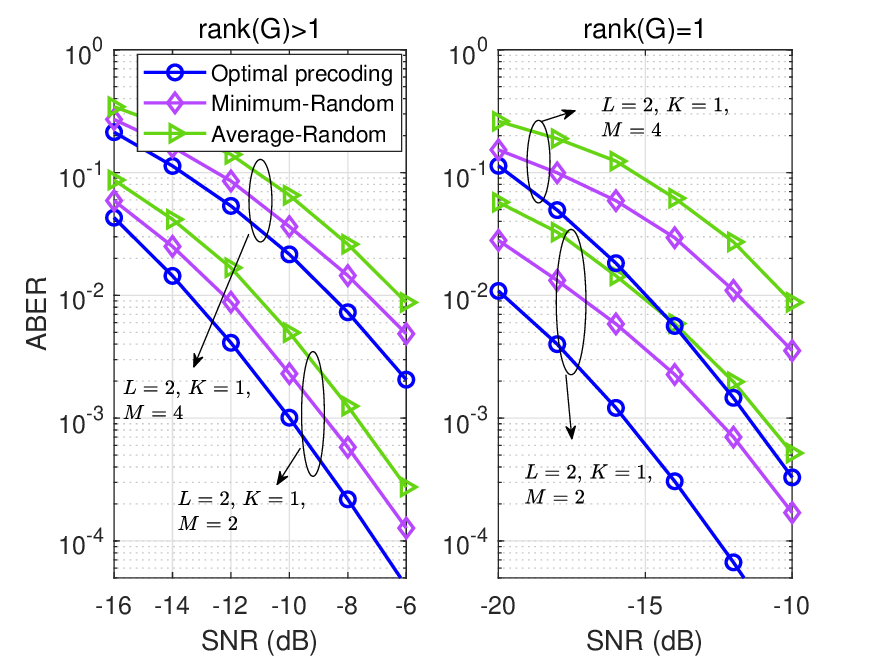}
	\caption{ABER of different precodings.}
	\end{minipage}
	\begin{minipage}{0.4\linewidth}
		\centering
		\includegraphics[width=1\linewidth]{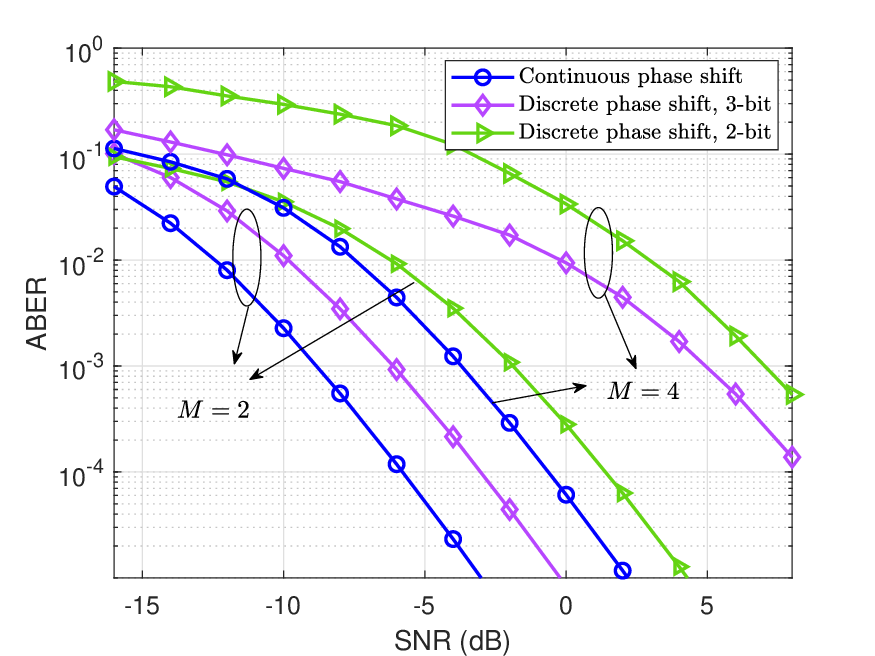}
\caption{ABER under discrete phase~shifts.}
	\end{minipage}
\vspace{-0.6cm}
\end{figure}
In Fig. 7, we evaluate the performance of the proposed precoding optimization in Section \Rmnum{4}.
The baselines with legend ``Minimum-Random" and ``Average-Random" represent minimum and average ABER obtained from more than 30 randomly selected precodings, respectively. As shown in Fig. 7, the optimized precoding brings significant performance gain by the SRPM.  With high rank $\bm{G}$, the performance gain in terms of SNR is about 2 dB. For rank-one $\bm{G}$, the obtained performance improvement grows up to about 5 dB. This is because for the rank-one $\bm{G}$, there only exists one major LoS path between the BS and RIS. On the contrary, due to the existence of multiple LoS paths for high rank $\bm{G}$, the performance is less sensitive to the direction of the beam, thereby resulting in the limited performance gain brought by optimization.
\vspace{-0.3cm}

\subsection{Impact of Discrete Phase Shifts}
In Fig. 8, the effectiveness of the SRPM with discrete phase shifts at RIS is validated. From this figure, the diversity order of the proposed SRPM remains unchanged with the discrete phase shift control when SNR grows large. In addition, the discrete phase shift causes a significant performance loss in terms of SNR, especially for higher order modulation at the BS.

\vspace{-0.5cm}
\subsection{Comparison Between the ML and SD-based Layered Detection}

\begin{figure}[!t]
\centering
	\begin{minipage}{0.4\linewidth}
		\centering
		\includegraphics[width=1\linewidth]{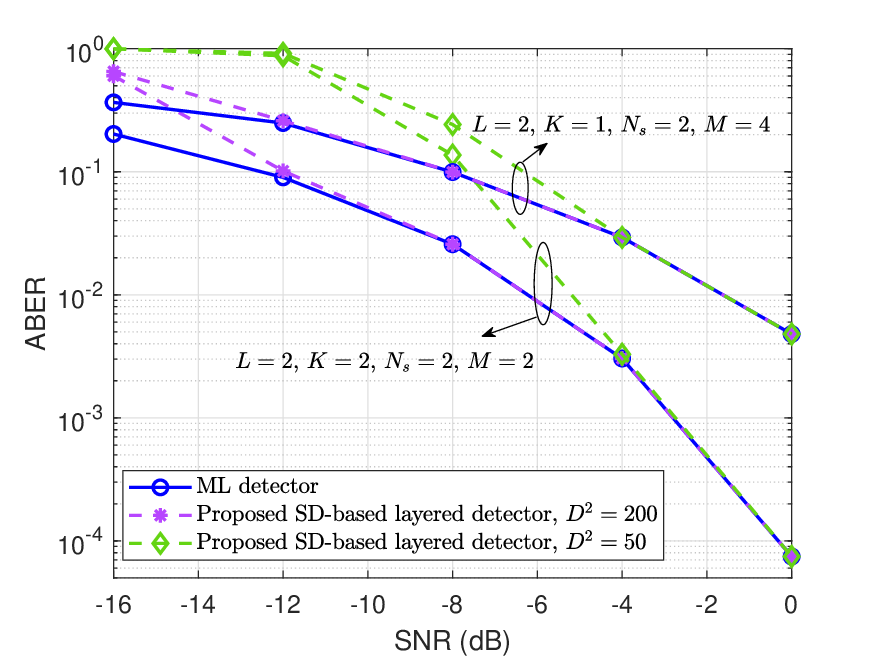}
	\end{minipage}
	\begin{minipage}{0.4\linewidth}
		\centering
		\includegraphics[width=1\linewidth]{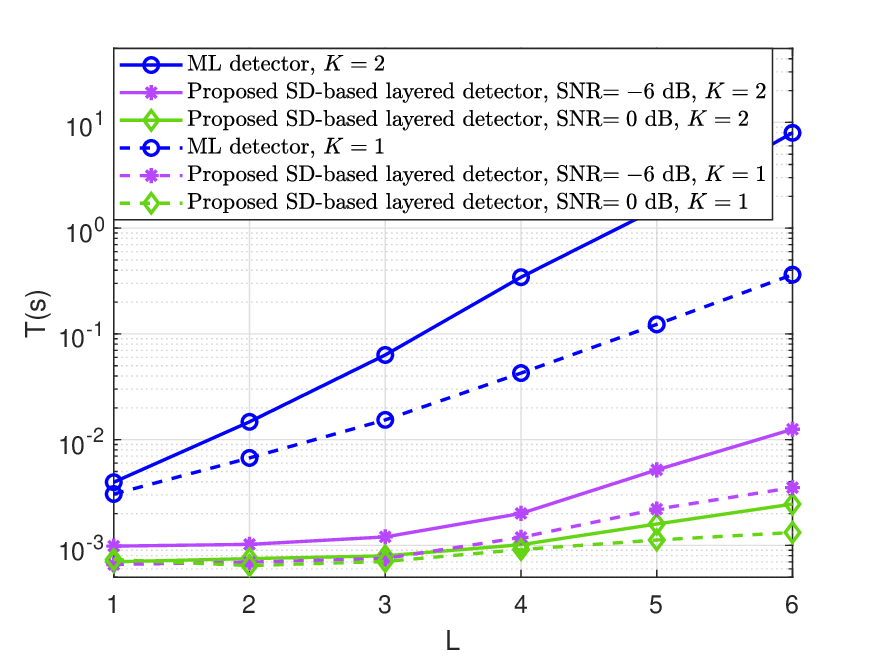}
	\end{minipage}
\caption{ (a) ABER of the ML and SD-based layered detection. (b) Running time of the ML and SD-based layered detection.}
\vspace{-0.6cm}
\end{figure}

Finally, we verify the effectiveness of the proposed SD-based layered detector. As shown in Fig. 9(a), we observe that the proposed SD-based layered detector asymptotically achieves the performance close to that of the ML detector as the increase of SNR. In particular, with sufficient large SNR, the proposed SD-based layered detection can achieve the optimal performance even with a small search radius $D$.

In Fig. 9(b), we compare the running time of the ML and SD-based layered detection, which reflects the complexity. Note that for the ML detection, the complexity grows exponentially with the number of sub-surfaces, $L$, and the number of data streams, $N_s$. It is seen from Fig. 9 that, compared with the ML detection, the reduction of running time brought by the proposed SD-based layered detection becomes quite pronounced upon increasing $L$. The value of complexity reduction parameter $\xi$ is approximately 0.27 and 0.13 when SNR is equal to -6 and 0 dB, respectively, which further demonstrates the low-complexity characteristic of the proposed scheme, especially under high SNR regimes.

\vspace{-0.4cm}
\section{Conclusion}
In this paper, we propose a novel paradigm of information transfer for the RIS-assisted MIMO systems to achieve higher communication rate. By superimposing information-bearing phase offsets onto predetermined base phases at RIS reflecting elements, extra messages can be transferred without the deployment of other RF chains. We derive the analytical ABER and ergodic capacity with the ML detection and further confirm that the SRPM achieves the optimal diversity order of $N_r$ for arbitrary parameters. Next, based on the derived ABER, we formulate a general precoding optimization framework to minimize the ABER and maximize the ergodic capacity, and confirm that the optimal solutions are available by exploiting the SDR technique. Moreover, we propose a two-layer framework to extend the traditional SD scheme to the SRPM detection. Numerical results demonstrate that the proposed SRPM outperforms the existing scheme in terms of ABER and ergodic capacity. Also, it is shown that we can effectively improve the spectral efficiency by increasing the number of data streams transmitted by BS and the number of sub-surfaces at RIS. The extension of SRPM to multiuser cases and scenarios with high mobility remains an interesting and challenging research direction.  In addition, how to fully utilize the DoF brought by the adjustable amplitude in active RIS to achieve higher-order modulation is also an interesting topic worth considering.

\appendices
\section{Proof of Lemma 1}
Firstly, according to (\ref{eq5}), we express the $i$th element of the received signal $\bm{y}$ in detail as
\begin{align}
y_i&=\sum_{m=1}^{N_s}s_m \sum_{p=1}^{N_t} w_{p,m} h_{d,i,p}+\sum_{l=1}^L \sum_{m=1}^{N_s} s_m e^{j k_l \Delta \theta} \sum_{n\in \mathcal{A}_l} g_{n,m} e^{j\theta_n^b} h_{r,i,n} +z_i\nonumber \\
&=\sum_{p=1}^{N_t}h_{d,i,p} \sum_{m=1}^{N_s} w_{p,m}s_m +\sum_{l=1}^L \sum_{n\in \mathcal{A}_l}h_{r,i,n}
\sum_{m=1}^{N_s} g_{n,m} e^{j\theta_n^b} s_m e^{j k_l \Delta \theta}    +z_i,
\end{align}
where $w_{p,m}$ is the $(p,m)$th element of $\bm{W}$, $h_{d,i,p}$ is the $(i,p)$th element of $\bm{H}_d$, and $h_{r,i,n}$ is the $(i,n)$th element of $\bm{H}_r$. We denote the $(n,m)$th element of $\bm{GW}$ by $g_{n,m}$, which is a constant. Then, we rewrite the  $i$th element of $\bm{\delta}$ in (\ref{delta}) as
\begin{align}\label{deltai}
\delta_i&=\sqrt{\beta_d}\sum_{p=1}^{N_t}h_{d,i,p} \sum_{m=1}^{N_s} w_{p,m}(s_m-\hat{s}_m) +\sqrt{\beta_r}\sum_{l=1}^L \sum_{n\in \mathcal{A}_l}h_{r,i,n} \sum_{m=1}^{N_s} g_{n,m} e^{j\theta_n^b} \left(s_m e^{j k_l \Delta \theta}-\hat{s}_m e^{j \hat{k}_l \Delta \theta}\right)\nonumber \\
&=\sqrt{\beta_d}\sum_{p=1}^{N_t} \sum_{\imath=1}^{N_r} \sum_{\jmath=1}^{N_t} h_{\omega,\imath,\jmath}^d  r_{u,i,\imath} r_{b,\jmath,p} \sum_{m=1}^{N_s} w_{p,m}(s_m-\hat{s}_m) \nonumber \\
&\quad +\sqrt{\beta_r}\sum_{l=1}^L \sum_{n\in \mathcal{A}_l} \sum_{\imath=1}^{N_r} \sum_{\jmath=1}^{N} h_{\omega,\imath,\jmath}^r  r_{u,i,\imath} r_{r,\jmath,n} \sum_{m=1}^{N_s} g_{n,m} e^{j\theta_n^b} \left(s_m e^{j k_l \Delta \theta}-\hat{s}_m e^{j \hat{k}_l \Delta \theta}\right),
\end{align}
where  $\hat{s}_m$ and $\hat{k}_l$ are the detected version of $s_m$ and $k_l$, respectively, $h_{\omega,\imath,\jmath}^d$ is the $(\imath,\jmath)$-th element of $\bm{H}_{\omega}^d$, $h_{\omega,\imath,\jmath}^d$ is the $(\imath,\jmath)$-th element of $\bm{H}_{\omega}^r$, $r_{u,i,\imath}$ is the $(i,\imath)$-th element of $\bm{R}_{u}$, $r_{b,\jmath,p}$ is the $(\jmath,p)$-th element of $\bm{R}_{b}$, and $r_{r,\jmath,n}$ is the $(\jmath,n)$-th element of $\bm{R}_{r}$. Considering that $h_{\omega,\imath,\jmath}^d$ and $h_{\omega,\imath,\jmath}^r$ are i.i.d. complex Gaussian variables with zero mean, we conclude that $\delta_i$ is also a complex Gaussian random variable with zero mean. Let us denote the covariance matrix of $\bm{\delta}$ by $\bm{C}$. The $(i,j)$-th element of $\bm{C}$ is evaluated by
\begin{align}
\mathbb{E}\left\{ \delta_i \delta_j^* \right \}&=\beta_d \sum_{\imath=1}^{N_r} \sum_{\jmath=1}^{N_t}
 r_{u,i,\imath}r_{u,j,\imath}^*
\left \vert \sum_{p=1}^{N_t}\sum_{m=1}^{N_s} r_{b,\jmath,p} w_{p,m}(s_m-\hat{s}_m)\right \vert ^2
\nonumber \\
&\quad +  \beta_r \sum_{\imath=1}^{N_r} \sum_{\jmath=1}^{N} r_{u,i,\imath}  r_{u,j,\imath}^*
\left \vert \sum_{l=1}^L \sum_{n\in \mathcal{A}_l}\sum_{m=1}^{N_s} r_{r,\jmath,n} g_{n,m} e^{j\theta_n^b} \left(s_m e^{j k_l \Delta \theta}-\hat{s}_m e^{j \hat{k}_l \Delta \theta}\right) \right \vert^2.
\end{align}
Considering that $\lambda\triangleq \Vert \bm{\delta}\Vert^2= \bm{\delta}^H \bm{I}_{N_r} \bm{\delta}$, the MGF of $\lambda$ can thus be calculated by using \cite[Eq.~(7)]{chisquared} as
\begin{align}
\mathcal{M}_\lambda(t)= \left ( \mathrm{det} \left(\bm{I}_{N_r}-t\bm{C} \right) \right )^{-1}=\prod_{i=1}^{N_r} (1-\lambda_i t)^{-1},
\end{align}
where $\lambda_i$ is the $i$-th non-ordered eigenvalue of $\bm{C}$.

Furthermore, considering the ideal uncorrelated cases, $h_{d,i,p}$, $\forall i,p$, and $h_{r,i,n}$, $\forall i,n$ in (\ref{deltai}) are i.i.d. complex Gaussian random variables with zero mean and variance $\beta_d$ and $\beta_r$, respectively. We further define $c_{p,m}\triangleq \sum_{m=1}^{N_s}w_{p,m}(s_m-\hat{s}_m)$ and conclude that the first term in (\ref{deltai}) follows $\mathcal{CN} \left(0,\beta_d \sum_{p=1}^{N_t} \vert c_{p,m} \vert^2 \right)$.
Similarly, we also obtain that the second term in (\ref{deltai}) follows $\mathcal{CN}\left(0,\right.$ $\left. \beta_r \sum_{l=1}^L \sum_{n\in \mathcal{A}_l } \left \vert d_{l,n} \right \vert^2\right)$,
where $ d_{l,n}\triangleq \sum_{m=1}^{N_s} g_{n,m} \left(s_m e^{j k_l \Delta \theta}-\hat{s}_m e^{j \hat{k}_l \Delta \theta}\right)$. Therefore from (\ref{deltai}), we have
\begin{align}\label{sigmad}
\bm{\delta}\sim \mathcal{CN} \left( \bm{0},\sigma_d^2\bm{I}_{N_r}\right),
\end{align}
where $\sigma_d^2\triangleq \beta_d \sum_{p=1}^{N_t} \vert c_{p,m} \vert^2+ \beta_r \sum_{l=1}^L \sum_{n\in \mathcal{A}_l } \left \vert d_{l,n} \right \vert^2$. Accordingly, it is found that $\lambda=\Vert \bm{\delta}\Vert^2$ is a Chi-squared random variable with the degrees of freedom $2N_r$. Hence, its MGF is also obtained by exploiting \cite[Eq. (7)]{chisquared} as
\begin{align}
\mathcal{M}_\lambda(t)=(1-\sigma_d^2t)^{-N_r}.
\end{align}
The proof completes.

\section{Proof of Theorem 2}
Firstly, the mutual information is reformulated through the chain rule as
\begin{align}\label{aeq2}
\mathcal{I}(\bm{s},\bm{v};\bm{y})=\mathcal{H}(\bm{y}) -\mathcal{H} (\bm{y} \vert \bm{s},\bm{v}),
\end{align}
where $\mathcal{H}(\cdot)$ denotes the differential entropy. Firstly, from the system model in (\ref{eq3}), the conditional differential entropy $\mathcal{H} (\bm{y} \vert \bm{s},\bm{v})$ in (\ref{aeq2}) is written as
\begin{align}\label{aeq3}
\mathcal{H} (\bm{y} \vert \bm{s},\bm{v})&=\mathcal{H} \left(\sqrt{P} (\bm{H}_d +\bm{H}_r \bm{\Theta}^b \bm{\Xi} \bm{G})\bm{W}\bm{s} +\bm{z} \vert \bm{s},\bm{v}\right)=\mathcal{H}(\bm{z})=N_r \log_2(\pi e \sigma^2),
\end{align}
where $\bm{z}\sim \mathcal{CN}(\bm{0},\sigma^2 \bm{I}_{N_r})$ and the second equality holds because $\sqrt{P} (\bm{H}_d +\bm{H}_r \bm{\Theta}^b \bm{\Xi} \bm{G})\bm{W}\bm{s}$ is a constant given $(\bm{s},\bm{v})$, $\bm{H}_d$, $\bm{H}_r$, and $\bm{G}$. As for the term $\mathcal{H}(\bm{y})$ in (\ref{aeq2}), we have
\begin{align}\label{aeq4}
\mathcal{H}(\bm{y})= -\int_{\mathbb{C}} f_{\bm{y}}(\bm{y})\log_2 \left( f_{\bm{y}}(\bm{y})\right) \mathrm{d}\bm{y},
\end{align}
where $f_{\bm{y}}(\cdot)$ is the PDF of $\bm{y}$.
By exploiting the observation that the full capacity is achieved by using equiprobable inputs \cite[Eq. (9)]{capacity1}, $f_{\bm{y}}(\bm{y})$ is further expressed as
\begin{align}\label{aeq5}
f_{\bm{y}}(\bm{y})&=\sum_{\bm{s}} \sum_{\bm{v}} \frac{1}{M^{N_s}(2K+1)^L} f(\bm{y}\vert \bm{s},\bm{v}) \nonumber \\
&= \sum_{\bm{s}} \sum_{\bm{v}} \frac{1}{S (\pi \sigma^2)^{N_r}} \exp \left ( -\frac{1}{\sigma^2}\left \Vert \bm{y}- \sqrt{P}(\bm{H}_d +\bm{H}_r \bm{\Theta}^b \bm{\Xi} \bm{G})\bm{W}\bm{s} \right \Vert^2 \right ),
\end{align} 
where $S\triangleq M^{N_s}(2K+1)^L$.
Plugging (\ref{aeq5}) into (\ref{aeq4}) and using the Jensen's inequality, we have
\begin{align}\label{aeq6}
\mathcal{H}(\bm{y})&\geq -\log_2 \left( \int_{\mathbb{C}} f_{\bm{y}}(\bm{y})^2 \mathrm{d}\bm{y}\right)\nonumber \\
&= -\log_2 \left(\frac{1}{S^2 (\pi\sigma^2)^{2N_r}}\int_{\mathbb{C}}  \sum_{\bm{s}} \sum_{\bm{v}}\sum_{\hat{\bm{s}}} \sum_{\hat{\bm{v}}}\exp\left ( -\frac{\left \Vert\bm{z} \right \Vert^2+\left \Vert \bm{z}+ \sqrt{P}\bm{\delta} \right \Vert^2}{\sigma^2} \right )\mathrm{d}\bm{z}\right)\nonumber \\
&\overset{(a)}{=}-\log_2 \left(\frac{1}{S^2 (2\pi\sigma^2)^{N_r}}\sum_{\bm{s}} \sum_{\bm{v}}\sum_{\hat{\bm{s}}} \sum_{\hat{\bm{v}}} \exp \left(-\frac{P\Vert \bm{\delta}\Vert^2 }{2\sigma^2} \right)\right)\nonumber \\
&\overset{(b)}{=}2 \log_2 S+ N_r \log_2\left(2\pi\sigma^2 \right)-\log_2 \left(S+\sum_{\bm{s}} \sum_{\bm{v}}\sum_{\hat{\bm{s}}\neq \bm{s}} \sum_{\hat{\bm{v}}\neq \bm{v}} \exp \left(-\frac{P\Vert \bm{\delta}\Vert^2 }{2\sigma^2} \right)\right),
\end{align}
where $(a)$ exploits the property of the Gaussian PDF and $(b)$ is due to the fact that $\bm{\delta}=\bm{0}$ when $(\hat{\bm{s}},\hat{\bm{v}})=(\bm{s},\bm{v})$.
Hence, by substituting (\ref{aeq2}), (\ref{aeq3}), and (\ref{aeq6}) into (1) and exploiting the approximation in \cite[Eq. (27)]{capacity2}, we have
\begin{align}
C_{\mathrm{SRPM}}&\approx 2 \log_2 S-\log_2 \left(S+\sum_{\bm{s}} \sum_{\bm{v}}\sum_{\hat{\bm{s}}\neq \bm{s}} \sum_{\hat{\bm{v}}\neq \bm{v}} \mathbb{E}\left\{\exp \left(-\frac{P\Vert \bm{\delta}\Vert^2 }{2\sigma^2} \right)\right \}\right)\nonumber \\
&=2\log_2 S-\log_2 \left(S+\sum_{\bm{s}} \sum_{\bm{v}}\sum_{\hat{\bm{s}}\neq \bm{s}} \sum_{\hat{\bm{v}}\neq \bm{v}} \mathcal{M}_{\lambda}\left(-\frac{P}{2\sigma^2}\right)\right),
\end{align}
which completes the proof.

\section{Proof of Lemma 2}
For the single-stream case, the $i$th element of $\bm{\delta}$ in (\ref{deltai}) becomes
\begin{align}\label{eeq37}
\delta_i=\sum_{p=1}^{N_t}h_{d,i,p} w_{p}(s-\hat{s}) +\sum_{l=1}^L \sum_{n\in \mathcal{A}_l}h_{r,i,n}
 e^{j\theta_n^b} \left(s e^{j k_l \Delta \theta}-\hat{s} e^{j \hat{k}_l \Delta \theta}\right)\bm{g}_n^H \bm{w},
\end{align}
where $w_p$ is the $p$th element of vertor $\bm{w}$, and $\bm{g}_n^H$ represents the $n$th row of $\bm{G}$. Analogously, we conclude that the first term of $\delta_i$, i.e., $\sum_{p=1}^{N_t}h_{d,i,p} w_{p}(s-\hat{s})$, follows a Gaussian distribution with zero mean. Its variance is calculated as
\begin{align}
\beta_d \sum_{p=1}^{N_t} \left \vert w_p (s-\hat{s})\right \vert^2=\beta_d \vert s-\hat{s}\vert^2 \sum_{p=1}^{N_t} \left \vert w_p \right \vert^2=\beta_d \vert s-\hat{s}\vert^2 \Vert \bm{w} \Vert^2=\beta_d \vert s-\hat{s}\vert^2,
\end{align}
which is irrespective with the precoding vector. Also, the variance of the second summation term in (\ref{eeq37}) is formulated as
\begin{align}
 \beta_r \sum_{l=1}^L \sum_{n\in \mathcal{A}_l}
\left \vert  \left(s e^{j k_l \Delta \theta}-\hat{s} e^{j \hat{k}_l \Delta \theta}\right)\bm{g}_n^H \bm{w}\right \vert^2&= \beta_r\sum_{l=1}^L \left \vert  \left(s e^{j k_l \Delta \theta}-\hat{s} e^{j \hat{k}_l \Delta \theta}\right) \right \vert^2 \sum_{n\in \mathcal{A}_l} \bm{w}^H \bm{g}_n \bm{g}_n^H \bm{w}\nonumber \\
=&\bm{w}^H  \bm{A}_{(s,\bm{v})\to(\hat{s},\hat{\bm{v}})} \bm{w},
\end{align}
where $\bm{A}_{(s,\bm{v})\to(\hat{s},\hat{\bm{v}})} \triangleq  \beta_r \sum_{l=1}^L \left \vert  \left(s e^{j k_l \Delta \theta}-\hat{s} e^{j \hat{k}_l \Delta \theta}\right) \right \vert^2 \sum_{n\in \mathcal{A}_l}  \bm{g}_n \bm{g}_n^H$ is  positive semidefinite. Hence, the $\sigma_d^2$ defined in (\ref{sigmad}) becomes
\begin{align}
\sigma_d^2 =\beta_d  \vert s-\hat{s}\vert^2+\bm{w}^H  \bm{A}_{(s,\bm{v})\to(\hat{s},\hat{\bm{v}})} \bm{w}.
\end{align}
Then, we rely on the accurate exponential bounds of the $Q$-function, that is
\begin{align}
\mathcal{Q}(x) \leq \sum_{i=1}^Q a_i e^{-b_i x^2},
\end{align}
where $a_i=\frac{\theta_i-\theta_{i-1}}{\pi}$, $b_i=\frac{1}{2\sin^2\theta_i}$, and $0=\theta_0\leq \theta_1\leq\cdots\leq \theta_Q=\pi/2$ \cite[Eq. (8)]{approx}. Combining the results in \emph{Lemma 1}, we obtain the desired ABER bound in (\ref{eq18}). The proof completes.

\section{Proof of Theorem 3}
To start, we formulate the Lagrangian function of the problem in (\ref{problem3}) as
\begin{align}
\mathcal{L}(\bm{W},\nu,\bm{\Lambda})=f(\bm{W})+\nu\left(\mathrm{Tr}(\bm{W})-1\right)-\mathrm{Tr}(\bm{\Lambda}\bm{W}),
\end{align}
where $\nu$ and $\bm{\Lambda}$ are the Lagrange multipliers for the constraints $\mathrm{Tr}(\bm{W})=1$ and $\bm{W}\succeq\bm{0}$, respectively. Considering that the  problem in (\ref{problem3}) is a convex problem, the  Karush-Kuhn-Tucker (KKT) conditions are listed as follows
\begin{align}
&\mathrm{K}1: \enspace \nabla \mathcal{L}(\bm{W}^{\mathrm{opt}},\nu,\bm{\Lambda})=\bm{0}\iff \nu\bm{I}_{N_t}-\bm{\Gamma}=\bm{\Lambda}, \nonumber \\
&\mathrm{K}2: \enspace \mathrm{Tr}(\bm{W}^{\mathrm{opt}})=1,\enspace \bm{\Lambda}\bm{W}^{\mathrm{opt}}=\bm{0},\nonumber \\
&\mathrm{K}3: \enspace \bm{\Lambda}\succeq \bm{0}.
\end{align}
where $\bm{\Gamma} \triangleq \sum_{s} \sum_{\bm{v}} \sum_{\hat{s}} \sum_{\hat{\bm{v}}}\frac{e(s,\bm{v} \to \hat{s},\hat{\bm{v}})N_r}{\left( 1+\frac{P}{4\sigma^2}\left(\beta_d\vert s-\hat{s} \vert^2+\mathrm{Tr}\left(\bm{A}_{(s,\bm{v} )\to( \hat{s},\hat{\bm{v}})}\bm{W}^{\mathrm{opt}} \right)\right)\right)^{N_r+1}}\bm{A}_{(s,\bm{v} \to \hat{s},\hat{\bm{v}})}$. Also, we can conclude that $\bm{\Gamma}\succeq\bm{0}$.

Let us denote the maximum eigenvalue of $\bm{\Gamma}$ by $\lambda_{\mathrm{max}}$. If $\nu<\lambda_{\max}$, we have that $\bm{\Lambda}$ cannot be positive semidefinite, which is inconsistent with condition $\mathrm{K}3$. Otherwise, when $\nu>\lambda_{\max}$, we obtain that $\bm{\Lambda}$ is full rank. Combined with $\bm{\Lambda}\bm{W}^{\mathrm{opt}}=\bm{0}$, it is found that $\bm{W}^{\mathrm{opt}}=\bm{0}$, which contradicts $\mathrm{Tr}(\bm{W}^{\mathrm{opt}})=1$ in condition $\mathrm{K}2$. Hence, we finally arrive at $\nu =\lambda_{\max}$. Denote the normalized eigenvector of $\bm{\Gamma}$ associated with eigenvalue $\lambda_{\max}$ as $\bm{\xi}_{\max}$. Hence, there exists an optimal beamforming matrix $\bm{W}^{\mathrm{opt}}$ with rank one, i.e., $\bm{\xi}_{\max}\bm{\xi}_{\max}^H$, satisfying the KKT conditions $\mathrm{K}1-\mathrm{K}3$. The proof completes.
\vspace{-0.6cm}
\section{Proof of Lemma 3}
For $\bm{G}=\beta \bm{a}\bm{b}^H$, we define $\gamma\triangleq \bm{b}^H \bm{w}$. Then, the $i$th element of $\bm{\delta}$ in (\ref{deltai}) becomes
\begin{align}
\delta_i=\sum_{p=1}^{N_t}h_{d,i,p} w_{p}(s-\hat{s}) +\beta \gamma \sum_{l=1}^L \sum_{n\in \mathcal{A}_l}h_{r,i,n}
 e^{j\theta_n^b} \left(s e^{j k_l \Delta \theta}-\hat{s} e^{j \hat{k}_l \Delta \theta}\right)a_n ,
\end{align}
where $a_n$ is the $n$th element of $\bm{a}$. Similar to the derivations in Appendix C, we obtain that
\begin{align}
\sigma_d^2=\beta_d \vert s-\hat{s}\vert^2+\vert \gamma \vert^2 \beta^2\beta_d \frac{N}{L}\sum_{l=1}^L \left \vert s e^{jk_l\Delta \theta}-\hat{s}e^{j\hat{k}_l\Delta \theta}\right\vert^2.
\end{align}
Note that the design of precoding vector is decoupled from the specific pairwise error.  Hence, the problem in (\ref{problem3}) is equivalent to maximize $\left \vert \bm{b}^H \bm{w} \right \vert^2$ under the constraint $\Vert \bm{w} \Vert^2=1$.
The optimal solution is given by $\bm{w}^{\mathrm{opt}}=\frac{\bm{b}}{\Vert \bm{b}\Vert}$, which completes the proof.


\begin{thebibliography}{1}
\bibliographystyle{IEEEtran}


\bibitem{6g2}
W. Saad, M. Bennis, and M. Chen, ``A vision of 6G wireless systems: Applications, trends, technologies, and open research problems,” \emph{IEEE Netw.}, vol. 34, no. 3, pp. 134--142, May/Jun. 2020.



\bibitem{you}
W. Xu \emph{et al.}, ``Toward ubiquitous and intelligent 6G networks: From architecture to technology," \emph{Sci. China Inf. Sci.}, vol.~66, no. 3, pp. 130300:1--2, Mar. 2023.

\bibitem{zongshu}
W. Xu \emph{et al.}, ``Edge learning for B5G networks with distributed signal processing: Semantic communication, edge computing, and wireless sensing," \emph{IEEE J. Sel. Topics Signal Process.}, vol. 17, no. 1, pp. 9--39, Jan. 2023.

\bibitem{isac1}
Y. Cui, F. Liu, X. Jing and J. Mu, ``Integrating sensing and communications for ubiquitous IoT: Applications, trends, and challenges," \emph{IEEE Netw.},  vol. 35, no. 5, pp. 158--167, Sept./Oct. 2021.

\bibitem{gomore}
J. Yao, Z. Yang, W. Xu, M. Chen, and D. Niyato, ``GoMORE: Global model reuse for resource-constrained wireless federated learning," \emph{IEEE Wireless Commun. Lett.}, early access. doi: 10.1109/LWC.2023.3281881.


\bibitem{mimo}
F. Rusek \emph{et al.}, ``Scaling up MIMO: Opportunities and challenges with very large arrays," \emph{IEEE Signal Process. Mag.}, vol. 30, no. 1, pp. 40--60, Jan. 2013.

\bibitem{ris1}
M. Di Renzo \emph{et al.}, ``Smart radio environments empowered by reconfigurable intelligent surfaces: How it works, state of research, and the road ahead," {\it{IEEE J. Sel. Areas Commun.}}, vol. 38, no. 11, pp. 2450--2525, Nov. 2020.

\bibitem{ris2}
W. Shi \emph{et al.}, ``Intelligent reflection enabling technologies for integrated and green Internet-of-Everything beyond 5G: Communication, sensing, and security," \emph{IEEE Wireless Commun.}, vol. 30, no. 2, pp. 147--154, Apr. 2023.

\bibitem{ris3}
Q. Wu and R. Zhang, ``Towards smart and reconfigurable environment: Intelligent reflecting surface aided wireless network," \emph{IEEE Commun. Mag.}, vol. 58, no. 1, pp. 106--112, Jan. 2020.

\bibitem{wu}
Q. Wu and R. Zhang, ``Intelligent reflecting surface enhanced wireless network via joint active and passive beamforming," \emph{IEEE Trans. Wireless Commun.}, vol. 18, no. 11, pp. 5394--5409, Nov. 2019.

\bibitem{yao}
J. Yao \emph{et al.}, ``Robust beamforming design for RIS-aided cell-free systems with CSI uncertainties and capacity-limited backhaul," \emph{IEEE Trans. Commun.}, early access. doi: 10.1109/TCOMM.2023.3277539.

\bibitem{cover}
S. Zeng, H. Zhang, B. Di, Z. Han, and L. Song, ``Reconfigurable intelligent surface (RIS) assisted wireless coverage extension: RIS orientation and location optimization," \emph{IEEE Commun. Lett.}, vol. 25, no. 1, pp. 269--273, Jan. 2021.

\bibitem{sop}
W. Shi, J. Xu, W. Xu, M. Di Renzo, and C. Zhao, ``Secure outage analysis of RIS-assisted communications with discrete phase control," \emph{IEEE Trans. Veh. Technol.}, vol. 72, no. 4, pp. 5435--5440, Apr. 2023.

\bibitem{inter}
Z. Peng, Z. Zhang, C. Pan, L. Li, and A. L. Swindlehurst, ``Multiuser full-duplex two-way communications via intelligent reflecting surface," \emph{IEEE Trans. Signal Process.}, vol. 69, pp. 837--851, 2021.

\bibitem{pbf2}
C. Pan \emph{et al.}, ``Multicell MIMO communications relying on intelligent reflecting surfaces," {\it{IEEE Trans. Wireless Commun.}}, vol. 19, no. 8, pp. 5218--5233, Aug. 2020.

\bibitem{pbf3}
C. Huang, A. Zappone, G. C. Alexandropoulos, M. Debbah, and C. Yuen, ``Reconfigurable intelligent surfaces for energy efficiency in wireless communication," {\it{IEEE Trans. Wireless Commun.}}, vol. 18, no. 8, pp. 4157--4170, Aug. 2019.

\bibitem{theo1}
R. Karasik, O. Simeone, M. D. Renzo, and S. Shamai Shitz, ``Adaptive coding and channel shaping through reconfigurable intelligent surfaces: An information-theoretic analysis," {\it{IEEE Trans. Commun.}}, vol. 69, no.11, pp. 7320--7334, Nov. 2021.

\bibitem{sm}
R. Y. Mesleh, H. Haas, S. Sinanovic, C. W. Ahn, and S. Yun, ``Spatial modulation," \emph{IEEE Trans. Veh. Technol.}, vol. 57, no. 4, pp. 2228--2241, Jul. 2008.

\bibitem{rsm}
E. Basar, ``Reconfigurable intelligent surface-based index modulation: A new beyond MIMO paradigm for 6G," {\it{IEEE Trans. Commun.}}, vol. 68, no. 5, pp. 3187--3196, May 2020.

\bibitem{tsm}
S. Luo \emph{et al.}, ``Spatial modulation for RIS-assisted uplink communication: Joint power allocation and passive beamforming design," \emph{IEEE Trans. Commun.}, vol. 69, no. 10, pp. 7017--7031, Oct. 2021.

\bibitem{gsm}
T. Ma, Y. Xiao, X. Lei, P. Yang, X. Lei, and O. A. Dobre, ``Large intelligent surface assisted wireless communications with spatial modulation and antenna selection," \emph{IEEE J. Sel. Areas Commun.}, vol. 38, no. 11, pp. 2562--2574, Nov. 2020.


\bibitem{xujindan}
J. Xu \emph{et al.}, ``Reconfiguring wireless environment via intelligent surfaces for 6G: Reflection, modulation, and security," \emph{Sci. China Inf. Sci.}, vol. 66, no. 3, pp. 130304:1--20, Mar. 2023.


\bibitem{guoshuai}
S. Guo, S. Lv, H. Zhang, J. Ye, and P. Zhang, ``Reflecting modulation," \emph{IEEE J. Sel. Areas Commun.}, vol. 38, no. 11, pp.~2548--2561, Nov. 2020.


\bibitem{pbit}
W. Yan \emph{et al.}, ``Passive beamforming and information transfer design for reconfigurable intelligent surfaces aided multiuser MIMO systems,” {\it{IEEE J. Select. Areas Commun.}}, vol. 38, no. 8, pp. 1793--1808, Aug. 2020.

\bibitem{stmm}
M. Mizmizi, D. Tagliaferri, and U. Spagnolini, ``Wireless communication with space-time modulated metasurfaces,"  \emph{arXiv preprint}, Feb. 2023. [Online]. Available: https://arxiv.org/abs/2302.08310

\bibitem{theo2}
H. V. Cheng and W. Yu, ``Degree-of-freedom of modulating information in the phases of reconfigurable intelligent surface," \emph{arXiv preprint}, Dec. 2021. [Online]. Available: https://arxiv.org/abs/2112.13787

\bibitem{rpm}
S. Lin, B. Zheng, G. C. Alexandropoulos, M. Wen, M. D. Renzo, and F. Chen, ``Reconfigurable intelligent surfaces with reflection pattern modulation: Beamforming design and performance analysis,” {\it{IEEE Trans. Wireless Commun.}}, vol. 20, no.~2, pp. 741--754, Feb. 2021.

\bibitem{qrm}
S. Lin, F. Chen, M. Wen, Y. Feng, and M. Di Renzo, ``Reconfigurable intelligent surface-aided quadrature reflection modulation for simultaneous passive beamforming and information transfer," {\it{IEEE Trans. Wireless Commun.}}, vol. 21, no.~3, pp. 1469--1481, Mar. 2022.


\bibitem{gqrm}
X. Jin, X. Li, Y. Feng, M. Wen, and Y. Yao, ``Joint RIS grouping design assisted generalized quadrature reflection modulation," \emph{IEEE Trans. Veh. Technol.}, vol. 72, no. 4, pp. 5447--5451, Apr. 2023.

\bibitem{yanwen}
W. Yan, X. Yuan, and X. Cao, ``Frequency reflection modulation for reconfigurable intelligent surface aided OFDM systems," \emph{IEEE Trans. Wireless Commun.}, vol. 21, no. 11, pp. 9381--9393, Nov. 2022.


\bibitem{pm1}
L. Yang, F. Meng, M. O. Hasna, and E. Basar, ``A novel RIS-assisted modulation scheme," {\it{IEEE Wireless Commun. Lett.}}, vol. 10, no. 6, pp. 1359--1363, Jun. 2021.

\bibitem{pm2}
L. Zhang, X. Lei, Y. Xiao, and T. Ma, ``Large intelligent surface-based generalized index modulation," {\it{IEEE Commun. Lett.}}, vol. 25, no. 12, pp. 3965--3969, Dec. 2021.

\bibitem{my}
J. Yao, J. Xu, W. Xu, C. Yuen, and X. You, `` A universal framework of superimposed RIS-phase modulation for MISO communication,” \emph{IEEE Trans. Veh. Technol.}, vol. 72, no. 4, pp. 5413--5418, Apr. 2023.



\bibitem{xing}
C. Xing, Y. Jing, S. Wang, S. Ma and H. V. Poor, ``New viewpoint and algorithms for water-filling solutions in wireless communications," \emph{IEEE Trans. Signal Process.}, vol. 68, pp. 1618--1634, 2020.


\bibitem{zhao}
Y. Zhao, W. Xu, X. You, N. Wang, and H. Sun, ``Cooperative reflection and synchronization design for distributed multiple RIS communications,” \emph{IEEE J. Sel. Topics Signal Process.}, vol. 16, no. 5, pp. 980--994, Aug. 2022.

\bibitem{near}
Y. Huang, W. Mei, and R. Zhang, "Empowering base stations with co-site intelligent reflecting surfaces: User association, channel estimation and reflection optimization," \emph{IEEE Trans. Commun.}, vol. 70, no. 7, pp. 4940--4955, Jul. 2022.


\bibitem{channelmodel}
K. Xu, J. Zhang, X. Yang, S. Ma, and G. Yang, ``On the sum-rate of RIS-assisted MIMO multiple-access channels over spatially correlated rician fading," \emph{IEEE  Trans.  Commun.}, vol. 69, no. 12, pp. 8228--8241, Sept. 2021.

\bibitem{group}
B. Zheng and R. Zhang, ``Intelligent reflecting surface-enhanced OFDM: Channel estimation and reflection optimization," {\it{IEEE Wireless Commun. Lett.}}, vol. 9, no. 4, pp. 518--522, Apr. 2020.

\bibitem{fun}
D. Tse and P. Viswanath, \emph{Fundamentals of Wireless Communication.} Cambridge, U.K.: Cambridge Univ. Press, 2005.

\bibitem{approx}
M. Chiani, D. Dardari, and M. K. Simon, ``New exponential bounds and approximations for the computation of error probability in fading channels," {\it{IEEE Trans. Wireless Commun.}}, vol. 2, no. 4, pp. 840--845, Jul. 2003.


\bibitem{capacity1}
S. X. Ng and L. Hanzo, ``On the MIMO channel capacity of multidimensional signal sets," \emph{IEEE Trans. Veh. Technol.}, vol. 55, no. 2, pp. 528--536, Mar. 2006.

\bibitem{capacity2}
U. Singh, M. R. Bhatnagar, and A. Bansal ``RIS-assisted SSK modulation: Reflection phase modulation and performance analysis," \emph{IEEE Commun.
Lett.}, vol. 26, no. 5, pp. 1012--1016,  Mar. 2022.

\bibitem{sdr1}
X. Yu, D. Xu, Y. Sun, D. W. K. Ng, and R. Schober, ``Robust and secure wireless communications via intelligent reflflecting surfaces,” \emph{IEEE J. Sel. Areas Commun.}, vol. 38, no. 11, pp. 2637--2652, Nov. 2020.

\bibitem{sdr2}
S. Hu, Z. Wei, Y. Cai, C. Liu, D. W. K. Ng, and J. Yuan, ``Robust and secure sum-rate maximization for multiuser MISO downlink systems with self-sustainable IRS,” \emph{IEEE Trans. Commun.}, vol. 69, no. 10, pp. 7032--7049, Oct. 2021.

\bibitem{cvx}
M. Grant and S. Boyd. (2016). CVX: MATLAB Software for Disciplined Convex Programming. [Online]. Available: http://cvxr.com/cvx


\bibitem{luoz}
Z.-Q. Luo, W.-K. Ma, A. M.-C. So, Y. Ye, and S. Zhang, ``Semidefinite relaxation of quadratic optimization problems," \emph{IEEE Signal Process. Mag.}, vol. 27, no. 3, pp. 20--34, 2010.

\bibitem{cest2}
Z.-Q. He and X. Yuan, ``Cascaded channel estimation for large intelligent metasurface assisted massive MIMO," \emph{IEEE Wireless Commun. Lett.}, vol. 9, no. 2, pp. 210--214, Feb. 2019.

\bibitem{cest3}
H. Liu, X. Yuan, and Y.-J. A. Zhang, ``Matrix-calibration-based cascaded channel estimation for reconfigurable intelligent surface assisted multiuser MIMO," \emph{IEEE J. Sel. Areas Commun.}, vol. 38, no. 11, pp. 2621--2636, Nov. 2020.

\bibitem{wangp}
P. Wang, J. Fang, L. Dai, and H. Li, ``Joint transceiver and large intelligent surface
design for massive MIMO mmWave systems," \emph{IEEE Trans. Wireless Commun.}, vol. 20, no. 2, pp. 1052--1064, Feb. 2021.

\bibitem{sd1}
S. Yang and L. Hanzo, ``Fifty years of MIMO detection: The road to large-scale MIMOs," \emph{IEEE Commun. Surveys Tuts.}, vol. 17, no. 4, pp. 1941--1988, 4th Quart. 2015.

\bibitem{sd2}
B. Hassibi and H. Vikalo, ``On the sphere-decoding algorithm I: Expected complexity," \emph{IEEE Trans. Signal Process.}, vol. 53, no. 8, pp. 2806--2818, Aug. 2005.

\bibitem{sd3}
J. Jald\'{e}n and B. Ottersten, ``On the complexity of sphere decoding in digital communications," \emph{IEEE Trans.
Signal Process.}, vol. 53, no. 4, pp. 1474--1484, Apr. 2005.

\bibitem{chisquared}
G. Tziritas, ``On the distribution of positive-definite Gaussian quadratic forms," \emph{IEEE Trans. Inf. Theory}, vol.~IT-33, no. 6, pp. 895--906, Nov. 1987.



\end{thebibliography}
\end{document}